\setlist{nolistsep}
\DeclareMathOperator*{\argmin}{arg\,min}
\newtheorem{definition}{Definition}[section]
\newtheorem{lemma}[definition]{Lemma}
\newtheorem{exam}[definition]{Example}
\newtheorem{proposition}{Proposition}[section]
\newtheorem{rem}{Remark}[section]
\newcommand{\R}{\mathbb{R}}
\renewcommand{\P}{\mathbb{P}}
\newcommand{\var}{\textrm{\rm var}}
\newcommand{\cov}{\textrm{\rm cov}}
\newcommand{\im}{ \textrm{\rm Im} }
\newcommand{\levy}{{L\'evy}}
\newcommand{\Tau}{{\mathcal{T}}}
\newcommand{\disteq}{\stackrel{\mathrm{d}}{=}}
\newcommand{\stdNTS}{\textup{\rm stdNTS}}
\newcommand{\tr}{{\texttt{T}}}
\newcommand{\diag}{\textrm{\rm diag}}
\newcommand{\VaR}{\textup{\rm VaR}}
\newcommand{\CoVaR}[4]{\textup{\rm CoVaR}_{{#1},{#3}}\left({{#2}|{#4}} \right)}
\newcommand{\CoCVaR}[4]{\textup{\rm CoCVaR}_{{#1},{#3}}\left({{#2}|{#4}} \right)}
\title{Portfolio Optimization with Relative Tail Risk}
\author{Young Shin Kim\footnote{College of Business, Stony Brook University, New York, USA (aaron.kim@stonybrook.edu).} }
\providecommand{\keywords}[1]{\textbf{\textit{Key words:}} #1}
\begin{document}
\maketitle

\begin{abstract}
This paper proposes analytic forms of portfolio CoVaR and CoCVaR on the normal tempered stable market model.
Since CoCVaR captures the relative risk of the portfolio with respect to a benchmark return,
we apply it to the relative portfolio optimization.
Moreover, we derive analytic forms for the marginal contribution to CoVaR and the marginal contribution to CoCVaR. 
We discuss the Monte-Carlo simulation method to calculate CoCVaR and the marginal contributions of CoVaR and CoCVaR.
As the empirical illustration, we show relative portfolio optimization with thirty stocks under the distress condition of the Dow Jones Industrial Average. Finally, we perform the risk budgeting method to reduce the CoVaR and CoCVaR of the portfolio based on the marginal contributions to CoVaR and CoCVaR.\\
\keywords{
Portfolio Optimization \and Relative Risk \and Normal Tempered Stable Model \and CoVaR \and CoCVaR
\and Marginal Contribution to Risk}%
\end{abstract}

\baselineskip=24pt

\doublespacing
\section{Introduction}
Harry Markowitz's mean-variance portfolio optimization technique (\cite{Markowitz:1952}) has made a remarkable impact on portfolio theory in finance. This innovative method has been extensively utilized successfully in portfolio selection, allocation, and risk management practices. The adoption of Markowitz's mean-variance model has resulted in significant improvements in investment outcomes and become a cornerstone of modern portfolio theory. Many investors have utilized Markowitz's method to determine an efficient portfolio weight vector. Still, constructing a portfolio that surpasses the market index (e.g., S\&P 500 index or DJIA) is not easy to be achieved. For that reason, some investors use relative portfolio optimization, that is, relaxing the assumption of the portfolio variance as the tracking error, which is a variance of the relative return. Here, the relative return is the difference between the portfolio return and a benchmark return, such as the market index return

This paper presents how to improve relative portfolio optimization based on two aspects.
First, we loosen the Gaussian assumption of Markowitz's model, which was empirically rejected in literature including \cite{Fama:1963}, \cite{Mandelbrot:1963a,Mandelbrot:1963b}, and \cite{Cont_Tankov:2004}. 
Non-Gaussian multivariate distributions have been introduced to capture stylized facts, such as fat-tails and asymmetric dependence, that are not accounted for by the Gaussian model.
This paper proposes the normal tempered stable (NTS) distribution as an alternative distribution to the Gaussian distribution.
The NTS is defined by the tempered stable subordinated Gaussian distribution (\cite{BarndorffNielsen1978}, \cite{BarndorffNielsenLevendorskii:2001} and \cite{BarndorffNielsenShephard:2001}). The NTS distribution has been popularly applied in finance by capturing the fat-tails of the asset return distribution and describing asymmetric dependence (See \cite{KimVolkmann:2013}). For example, it is applied to financial risk management in \cite{Anad_et_al:2017} and \cite{KurosakiKim:2018}, and portfolio management in \cite{EberleinMadan:2010}, \cite{Anad_et_al:2016}, and \cite{kim2022portfolio}.
In addition, \cite{KIM2015512} applies a \levy~ process generated by the two-dimensional NTS distribution to Quanto option pricing, and its extension to capture stochastic dependence is further discussed in \cite{kim2023multi}.
Recently, the NTS distribution was applied to cryptocurrency portfolio optimization in \cite{kurosaki2022cryptocurrency}.

Second, we enhance the portfolio theory by replacing CoVaR and CoCVaR instead of the variance of the relative return as the risk measure.
Traditionally, portfolio variance, value at risk (VaR), and conditional value at risk (CVaR) have been employed as the portfolio risk measure, 
but those are absolute risk measures. Relative traders prefer to use tracking errors such as the variance of relative return with respect to the benchmark return instead of the absolute risk measures. In this paper, we take CoVaR (\cite{10.1257/aer.20120555}) and CoCVaR (\cite{HuangUryasev2018}) as the risk measures of relative portfolio optimization. CoVaR is proposed as a measure of systemic risk, which is the VaR of the financial system under the condition of a distressed market. \cite{GIRARDI20133169} calculate CoVaR on the multivariate GARCH model and \cite{REBOREDO2015214} use the copula method. CoCVaR (or CoES by \cite{10.1257/aer.20120555}) is an expected downturn of the financial system under the condition of a distressed market. \cite{HuangUryasev2018} define the mathematical formula of CoCVaR and apply it to measure the risk of the ten largest publicly traded banks in the United States under a distressed condition of market factors\footnote{VIX, liquidity spread, three-month treasury change, term spread change, credit spread change, equity market return, and real estate sector excess return.}. Recently, \cite{liu2021systemic} discuss CoVaR and CoCVaR on the GARCH model with two-dimensional NTS innovations and do the back-testing. 

The paper discusses portfolio optimization minimizing the portfolio CoVaR and CoCVaR concerning the market index and derives the marginal contributions to risk for the portfolio CoVaR and CoCVaR. The marginal contribution to risk is the rate of change in risk with respect to a small percentage change in proportion to a member's asset. Mathematically it is defined by the first derivative of CoVaR or CoCVaR with respect to the marginal weight. 
In the context of absolute optimization, the marginal contributions to VaR and CVaR are employed to identify assets with high and low levels of risk.
The general form of marginal risk contributions for the VaR and CVaR are provided in \cite{GourierouxaEtAl:2000}. The analytic forms of the marginal contributions for VaR and CVaR are discussed under the skewed-$t$ distribution in \cite{StoyanovRachevFabozzi:2013}, and under the NTS distribution in \cite{Kim_et_al:2012} and \cite{kim2022portfolio}.
Since we focus on relative optimization in this paper, we provide analytic formulas for the marginal contributions to CoVaR and CoCVaR under the NTS market model instead of VaR and CVaR. Those marginal contributions to CoVaR and CoCVaR help relative traders to make decisions in portfolio rebalancing. However, the multiple integrals involved in these formulas can present technical difficulties for numerical calculations. To solve this problem, we use Monte-Carlo simulation (MCS) methods. Furthermore, we perform the risk budgeting based on the marginal contributions to CoVaR and CVaR empirically.

The remainder of this paper is organized as follows. The definition of portfolio CoVaR and portfolio CoCVaR are presented in Section 2.
Section 3 reviews the NTS market model. Section 4 presents the portfolio CoVaR and CoCVaR on the NTS market model, along with a detailed analysis of the marginal contributions to CoVaR and CoCVaR. Empirical illustrations are given in section 5. In this section, we exhibit MCS method to calculate CoCVaR on the NTS market model and do the portfolio CoCVaR minimizing portfolio optimization with the estimated parameters. We also discuss portfolio budgeting using the marginal CoVaR and the marginal CoCVaR. Finally, Section 6 concludes. Proofs and mathematical details are presented in Appendix.

\section{CoVaR and CoCVaR}
Let $X$ be a random variable for a market factor, for instance, the market index returns, and $Y$ be a random variable for asset or  portfolio returns. We consider a random vector $(X, Y)$ following a joint distribution.
The CoVaR and CoCVaR of $Y$ at a significant level $\eta$ under a condition of the event $X\le -\VaR_\zeta(X)$ are defined by
\[
\CoVaR{\eta}{Y}{\zeta}{X}=\VaR_\eta(Y|X\le -\VaR_\zeta(X)) = -\inf_x \{x|P(Y\le x|X\le -\VaR_\zeta(X))\ge \eta \},
\]
and
\[
\CoCVaR{\eta}{Y}{\zeta}{X} = -E\left[Y|Y<-\CoVaR{\eta}{Y}{\zeta}{X}, X\le -\VaR_\zeta(X)\right],
\]
respectively\footnote{See \cite{10.1257/aer.20120555},\cite{HuangUryasev2018}, and \cite{liu2021systemic} for more details}, where $\VaR_\zeta(X)$ is the Value-at-Risk ($\VaR$) of $X$ for a significant level $\zeta$ given as
\[
\VaR_\zeta(X) = -\inf_x\{ x | P\left(X\le x\right)\ge\zeta\}.
\]
If the joint distribution of $(X,Y)$ is continuous, then $\VaR_\zeta(X) = -F_X^{-1}(\zeta)$ where 
$F_X$ is the cumulative distribution function (cdf) for the marginal distribution of $X$ and
$F_X^{-1}$ is the inverse function of $F_X$. Moreover, we have
\[
F_{(X,Y)}(-\VaR_\zeta(X), -\CoVaR{\eta}{Y}{\zeta}{X}) = \eta \zeta
\]
for the joint cdf of $(X,Y)$.

Consider $N$ number of stocks and a market index
and an $(N+1)$-dimensional random vector $R=(R_0$, $R_1$, $\cdots$, $R_N)^{\tr}$, where 
$R_0$ is the index return and $R_n$ is the return of the $n$th stock for $n\in\{1$, $2$, $\cdots$, $N\}$.
Let $w=(w_1$, $w_2$, $\cdots$, $w_N)^\tr\in I^N$ be an $N$-dimensional vector satisfying $\sum_{n=1}^N w_n= 1$ for $I=[0,1]$. The $n$th element $w_n$ of $w$ is the proportion of capital invested in the $n$th stock for $n\in\{1,2,\cdots, N\}$. In this case, $w$ be referred to as the \textit{capital allocation weight vector} for the $N$ stocks in the market\footnote{In this paper, we consider the long-only portfolio.}. Then we have the random variable of the portfolio return as $R_p(w) = \sum_{n=1}^N w_n, R_n$.

Suppose that $R$ follows $(N+1)$-dimensional normal distribution with a mean vector $\mu=(\mu_0$, $\mu_1$, $\cdots$, $\mu_N)^\tr$ and a covariance matrix $\Sigma$, that is $R\sim \Phi(\mu, \Sigma)$. Let $\sigma_{n,m}$ be the $(n+1, m+1)$-th element of $\Sigma$ for $n,m\in\{0,1,\cdots, N\}$. Then the bivariate random vector $(R_0, R_p(w))^\tr$  follows the bivariated normal distribution, $(R_0, R_p(w))^\tr\sim\Phi(\bar\mu, \bar\Sigma)$ with 
\[
\bar\mu = (\mu_0, \mu_p(w))^\tr,  \text{ and }
\bar\Sigma = \left(\begin{matrix}
		\sigma_{0,0}& \sigma_{p,0}(w)\\
		 \sigma_{p,0}(w)& \sigma_{p}^2(w)
	\end{matrix}\right)
,
\]
where
$\mu_p(w) = \sum_{n=1}^N w_n \mu_n$, $\sigma_{p,0}(w)=\sum_{n-1}^Nw_m\sigma_{0,n}$, $\sigma_p^2(w) = \sum_{n=1}^N\sum_{m=1}^N w_m w_m\sigma_{n,m}$.
We can calculate VaR of $R_0$ as $\VaR_\zeta(R_0) = q_{1-\zeta}\sqrt{\sigma_{0,0}}-\mu_0$, where $ q_{1-\zeta}$ is $(1-\zeta)$-quantile value of the standard normal distribution. Suppose the value $x$ satisfies
\[
\int_{-\infty}^x\int_{-\infty}^{-\VaR_\zeta(R_0)} f_{\Phi(\bar\mu, \bar\Sigma)}(x_1,x_2) dx_1 dx_2 = \eta\zeta,
\]
where $f_{\Phi(\bar\mu, \bar\Sigma)}$ is the probability density function of $\Phi(\bar\mu, \bar\Sigma)$.
Then $\CoVaR{\eta}{R_p(w)}{\zeta}{R_0}=-x$. Moreover,
\[
\CoCVaR{\eta}{R_p(w)}{\zeta}{R_0} = -\frac{1}{\eta\zeta}\int_{-\infty}^{-\CoVaR{\eta}{R_p(w)}{\zeta}{R_0}}\int_{-\infty}^{-\VaR_\zeta(R_0)} x_1 f_{\Phi(\bar\mu, \bar\Sigma)}(x_1,x_2) dx_1 dx_2.
\]

\section{Standard NTS distribution and NTS Market model}
In this section, we define the multivariate standard NTS distribution and construct the market model based on the distribution. 
\subsection{Standard NTS Distribution}
Let $N$ be a finite positive integer and $\Xi=(\Xi_1, \Xi_2, \cdots, \Xi_N)^{\tr}$ be a multivariate random variable given by
\[
\Xi = \beta(\Tau-1) + \textup{diag}(\gamma) \varepsilon \sqrt{\Tau} ,
\]
where 
\begin{itemize}
\item $\Tau$ is the tempered stable subordinator with parameters $(\alpha,\theta)$, and is independent of $\varepsilon_n$ for all $n=1,2,\cdots, N$.
\item $\beta = (\beta_1, \beta_2, \cdots, \beta_N)^{\tr}\in\R^N$ with $|\beta_n|<\sqrt\frac{2\theta}{2-\alpha}$ for all $n\in\{1,2,\cdots, N\}$.
\item $\gamma = (\gamma_1, \gamma_2, \cdots, \gamma_N)^{\tr}\in\R_+^N$ with $\gamma_n = \sqrt{1-\beta_n^2\left(\frac{2-\alpha}{2\theta}\right) }$  for all $n\in\{1,2,\cdots, N\}$ and $\R_+=[0,\infty)$.
\item $\varepsilon = (\varepsilon_1, \varepsilon_2, \cdots, \varepsilon_N)^{\tr}$ is $N$-dimensional standard normal distribution with a covariance matrix $\Sigma$. That is, $\varepsilon_n\sim \Phi(0,1)$ for $n\in\{1,2,\cdots, N\}$ and $(k,l)$-th element of $\Sigma$ is given by $\rho_{k,l}=\cov(\varepsilon_k,\varepsilon_l)$ for $k,l\in\{1,2,\cdots,N\}$. Note that $\rho_{k,k}=1$.
\end{itemize}
In this case, $\Xi$ is referred to as the \textit{$N$-dimensional standard NTS random variable} with parameters $(\alpha$, $\theta$, $\beta$, $\Sigma)$ and we denote it by $\Xi\sim \textup{stdNTS}_N(\alpha$, $\theta$, $\beta$, $\Sigma)$ (See more details in \cite{KimVolkmann:2013} and \cite{kim2022portfolio}.).
  \begin{itemize}
  \item The probability density function (pdf) $f_\Tau(t)$ of $\Tau$ is $f_\Tau(t) = \frac{1}{2\pi}\int_{-\infty}^\infty e^{-iut}\phi_\Tau(u)du$, where $\phi_{\Tau}$ is the characteristic function (ch.f) of $\Tau$ given by 
$
\phi_{\Tau}(u) =
\exp\left(-\frac{2\theta^{1-\frac{\alpha}{2}}}{\alpha}\left((\theta-iu)^{\frac{\alpha}{2}}-\theta^{\frac{\alpha}{2}}\right)\right).
$
  \item The cdf of the stdNTS vector $\Xi$ is given by 
  \begin{equation}\label{eq:cdf nts}
  F_\Xi(a_1, \cdots, a_N) = \int_0^\infty \int_{-\infty}^{\frac{a_N-\beta_N(t-1)}{\gamma_N\sqrt{t}}}\cdots\int_{-\infty}^{\frac{a_1-\beta_1(t-1)}{\gamma_1\sqrt{t}}}f_\varepsilon(x_1, \cdots, x_N)dx_1\cdots dx_N f_\Tau(t)dt,
  \end{equation}
where
 $f_\varepsilon$ is the pdf of $N$-dimensional normal distribution with mean $0$ and covariance $\Sigma$.
  \item The pdf if the vector $\Xi$ is given by 
  \begin{equation}\label{eq:pdf nts}
  f_\Xi(x_1, \cdots, x_N) = \int_0^\infty f_{\Phi(m(t), \Sigma(t))}(x_1, \cdots, x_N)f_\Tau(t) dt
  \end{equation}
  where $f_{\Phi(m(t), \Sigma(t))}(x_1, \cdots, x_N)$ is the pdf of the $N$-dimensional normal distribution with mean $m(t)=(\beta_1(t-1), \cdots, \beta_N(t-1))$ and covariance $\Sigma(t)=\left(t\gamma_k \gamma_l \rho_{k,l}\right)_{k,l\in\{1,\cdots, N\}}$.
\item By \cite{Gil-Pelaez:1951}, the marginal cdf of $\Xi_n$ for $n\in\{1,2,\cdots, N\}$ is equal to 
\begin{equation}\label{eq:cdfStdNTS}
F_{\Xi_n}(x) = \frac{1}{2}-\frac{1}{\pi}\int_{-\infty}^\infty \frac{\im (e^{-iux} \phi_{\Xi_n}(u))}{u} du.
\end{equation}
where $\phi_{\Xi_n}$ is the ch.f of $\Xi_n$ given by
\begin{align*}
\phi_{\Xi_n}(u)&=\exp\left(-\beta_niu-\frac{2\theta^{1-\frac{\alpha}{2}}}{\alpha}
\left(\left(\theta-i\beta_n u+\frac{u^2}{2}\left(
1-\beta_n^2\left(\frac{2-\alpha}{2\theta}\right)
\right)\right)^{\frac{\alpha}{2}}-\theta^{\frac{\alpha}{2}}\right)\right).
\end{align*}
Moreover, the marginal pdf of $\Xi_n$ is obtained by the inverse Fourier transform for the characteristic function as
\[
f_{\Xi_n}(x) = \frac{1}{2\pi}\int_{-\infty}^\infty e^{-iux} \phi_{\Xi_n}(u) du.
\]
  \item Covariance between $\Xi_n$ and $\Xi_m$ for $n,m\in\{1,2,\cdots, N\}$ is equal to
  \begin{equation}\label{eq:cov Xi_n and Xi_m}
  \cov(\Xi_n, \Xi_m)=\gamma_n\gamma_m\rho_{n,m}+\beta_n\beta_m\left(\frac{2-\alpha}{2\theta}\right).
  \end{equation}
  \end{itemize}
\subsection{NTS Market Model}
Consider a portfolio having $N$ assets for a positive integer $N$. The return of the assets in the portfolio is given by a random vector $R=(R_1$, $R_2$, $\cdots$, $R_N)^\tr$. We suppose that the return $R$ follows 
\begin{equation} \label{eq:NTS Market}
R = \mu + \diag(\sigma) \Xi 
\end{equation}
where  $\mu = (\mu_1, \mu_2, \cdots, \mu_N)^\tr\in\R^N$, $\sigma = (\sigma_1, \sigma_2, \cdots, \sigma_N)^\tr\in\R_+^N$ and $\Xi\sim \stdNTS_N(\alpha, \theta, \beta, \Sigma)$. Then, we have $E[R_n]=\mu_n$ and $\var(R_n) = \sigma_n^2$ for all $n\in\{1,2, \cdots, N\}$. This market model is referred to as the \textit{NTS market model}.

Let $w=(w_1, w_2, \cdots, w_N)^\tr$ be the capital allocation weight vector. Then the portfolio return for $w$ is equal to $R_P(w)=w^\tr R$. The distribution of $R_P(w)$ is presented in the following proposition from \cite{kim2022portfolio}.
\begin{proposition}[\cite{kim2022portfolio}]\label{pro:mu+sigma stdNTS} 
Consider a portfolio with $N$ assets for a positive integer $N$.
Let $w = (w_1$, $w_2$, $\cdots$, $w_N)^{\tr}$ be the capital allocation weight vector for the portfolio.
Suppose $R=(R_1$, $R_2$, $\cdots$, $R_N)^\tr\in\R^N$ follows the NTS market model given by \eqref{eq:NTS Market} with $\mu \in\R^N$,
$\sigma \in \R_+^N$, and $\Xi \sim \textup{stdNTS}_N(\alpha, \theta, \beta, \Sigma)$.  Then
\begin{equation} \label{eq:NTS Portfolio Return}
R_P(w) \disteq \bar\mu(w) + \bar\sigma(w) \Xi ~~~\text{ for }~~~ \Xi\sim \textup{stdNTS}_1(\alpha, \theta, \bar\beta(w), 1),
\end{equation}
where 
\[
\bar\mu(w) = \sum_{n=1}^N w_n \mu_n,~~~ \bar\sigma(w) = \sqrt{\sum_{n=1}^N w_n \sum_{m=1}^N w_n w_m \cov(R_n, R_m)},
~~~\text{and}~~~\bar\beta(w)=\frac{\sum_{n=1}^N  w_n\sigma_n\beta_n}{\sigma_p(w)}.
\]
\end{proposition}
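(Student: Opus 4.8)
The plan is to exploit the conditional-Gaussian (subordination) structure of the NTS law rather than to manipulate characteristic functions directly. Writing $\Xi = \beta(\Tau-1) + \diag(\gamma)\varepsilon\sqrt{\Tau}$ and substituting $R = \mu + \diag(\sigma)\Xi$ into $R_P(w) = w^\tr R$, I would first expand
\[
R_P(w) = \bar\mu(w) + (\Tau-1)\sum_{n=1}^N w_n\sigma_n\beta_n + \sqrt{\Tau}\sum_{n=1}^N w_n\sigma_n\gamma_n\varepsilon_n,
\]
so that the entire dependence on the subordinator is made explicit and the last sum is a fixed linear combination of the jointly normal vector $\varepsilon$. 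Because the same scalar subordinator $\Tau$ drives both $R_P(w)$ and the target variable $\bar\mu(w)+\bar\sigma(w)\Xi'$ with $\Xi'\sim\stdNTS_1(\alpha,\theta,\bar\beta(w),1)$, it suffices to show that, conditional on $\Tau=t$, the two sides share the same Gaussian law for every $t>0$; integrating against the common density $f_\Tau$ then yields the asserted equality in distribution $\disteq$.

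Conditional on $\Tau=t$, the display above is normal with mean $\bar\mu(w)+(t-1)\sum_n w_n\sigma_n\beta_n$ and variance $t\sum_{n,m}w_nw_m\sigma_n\sigma_m\gamma_n\gamma_m\rho_{n,m}$, while $\bar\mu(w)+\bar\sigma(w)\Xi'$ is normal with mean $\bar\mu(w)+(t-1)\bar\sigma(w)\bar\beta(w)$ and variance $t\,\bar\sigma(w)^2\bar\gamma(w)^2$, where $\bar\gamma(w)=\sqrt{1-\bar\beta(w)^2(2-\alpha)/(2\theta)}$ comes from applying the one-dimensional stdNTS definition to $\Xi'$. Matching the means is immediate from the definition of $\bar\beta(w)$: since $\bar\beta(w)=\big(\sum_n w_n\sigma_n\beta_n\big)/\bar\sigma(w)$, we have $\bar\sigma(w)\bar\beta(w)=\sum_n w_n\sigma_n\beta_n$, so the drift coefficients agree.

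The main work -- and the step I expect to be the crux -- is matching the two conditional variances, i.e. verifying
\[
\bar\sigma(w)^2\bar\gamma(w)^2 = \sum_{n=1}^N\sum_{m=1}^N w_nw_m\sigma_n\sigma_m\gamma_n\gamma_m\rho_{n,m}.
\]
Here I would start from $\bar\sigma(w)^2=\sum_{n,m}w_nw_m\sigma_n\sigma_m\cov(\Xi_n,\Xi_m)$ and insert the covariance identity \eqref{eq:cov Xi_n and Xi_m}, $\cov(\Xi_n,\Xi_m)=\gamma_n\gamma_m\rho_{n,m}+\beta_n\beta_m(2-\alpha)/(2\theta)$, which splits $\bar\sigma(w)^2$ into the desired correlated-Gaussian term plus $(2-\alpha)/(2\theta)\big(\sum_n w_n\sigma_n\beta_n\big)^2$. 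Using $\sum_n w_n\sigma_n\beta_n=\bar\sigma(w)\bar\beta(w)$ from the previous paragraph, this extra term equals $\bar\sigma(w)^2\bar\beta(w)^2(2-\alpha)/(2\theta)$, which is precisely $\bar\sigma(w)^2\big(1-\bar\gamma(w)^2\big)$ by the definition of $\bar\gamma(w)$. Subtracting it recovers the identity above and completes the variance match. Once both conditional moments agree, the conditional normals coincide for every $t>0$, and integrating over the law of $\Tau$ finishes the proof.
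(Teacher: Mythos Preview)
Your proposal is correct. The paper does not prove this proposition itself (it is quoted from \cite{kim2022portfolio}), but the computations you outline---expanding $R_P(w)$ in terms of $\Tau$ and $\varepsilon$, using $\bar\sigma(w)\bar\beta(w)=\sum_n w_n\sigma_n\beta_n$ for the drift, and splitting $\bar\sigma(w)^2$ via the covariance identity \eqref{eq:cov Xi_n and Xi_m} to obtain $\bar\sigma(w)^2\bar\gamma(w)^2=\sum_{n,m}w_nw_m\sigma_n\sigma_m\gamma_n\gamma_m\rho_{n,m}$---are exactly the steps carried out in the Appendix proof of Proposition~\ref{prop:Index and Portfolio}, which recycles this argument; your conditioning-on-$\Tau=t$ framing is simply a clean rephrasing of what the paper calls ``the Gaussian property.''
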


\section{\label{sec:NTS}Portfolio CoVaR and CoCVaR with respect to a Market Index}
Consider $N$ number of stocks and a market index for a given market as follows:
\begin{itemize}
\item $R_0$ is the market index return.
\item $R_n$ for $n\in\{1,2,\cdots, N\}$ is the individual stock return of a stock portfolio consisting of $N$ stocks.
\end{itemize}
Suppose $R=(R_0, R_1, \cdots, R_N)^{\tr}$ follows a ($N+1$)-dimensional NTS market model, that is, we have
\begin{equation}\label{eq:nts market model with index}
R =\mu+\textup{diag}(\sigma)\Xi \text{ with } \Xi\sim \textup{stdNTS}_{N+1}(\alpha, \theta, \beta, \Sigma)
\end{equation}
where
\begin{itemize}
\item $\mu=(\mu_0, \mu_1, \cdots, \mu_N)^{\tr}\in\R^{N+1}$, and $\sigma=(\sigma_0, \sigma_1, \cdots, \sigma_N)^{\tr}\in\R_+^{N+1}$,  
\item $\alpha\in(0,2)$, $\theta>0$, $\beta=(\beta_0, \beta_1, \cdots, \beta_N)^{\tr}\in\R^{N+1}$ with $|\beta_n|<\sqrt{\frac{2\theta}{2-\alpha}}$ for $n\in\{0,1,\cdots, N\}$,
\item $\Sigma$ is the $(N+1)\times(N+1)$ covariance matrix and $\rho_{n,m}$ is the $(n+1,m+1)$-th element of $\Sigma$ for $n,m\in\{0,1,\cdots, N\}$.
\end{itemize} 
Based on the assumption, we obtain the following proposition whose proof is in Appendix.
\begin{proposition}\label{prop:Index and Portfolio}
Suppose $R=(R_0, R_1, \cdots, R_N)^{\tr}$ follows a ($N+1$)-dimensional NTS market model as \eqref{eq:nts market model with index}.
Let $w = (w_1, w_2, \cdots, w_N)^{\tr}$ be a capital allocation weight vector and $R_p(w)=\sum_{n=1}^N w_n R_n$.
The bivariate random vector $(R_0, R_P(w))^\tr$ follows 2-dimensional NTS model such as 
\[
\left(\begin{matrix}
R_0 \\ R_p(w)
\end{matrix}\right)
=
\left(\begin{matrix}
\mu_0\\ \mu_p(w)
\end{matrix}\right)
+
\left(\begin{matrix}
\sigma_0 & 0\\ 0 &\sigma_p(w)
\end{matrix}\right)
\left(\begin{matrix}
\Xi_0 \\ \Xi_p(w)
\end{matrix}\right)
\]
for
\[
\left(\begin{matrix}
\Xi_0 \\ \Xi_p(w)
\end{matrix}\right)\sim \textup{stdNTS}_2\left(\alpha, \theta, 
\left(\begin{matrix}
\beta_0 \\ \beta_p(w)
\end{matrix}\right),
\left(\begin{matrix}
1 & \rho_p(w)\\ \rho_p(w) & 1
\end{matrix}\right)
\right),
\]
where
\[
\mu_p(w) = \sum_{n=1}^N w_n \mu_n,~~~ 
\sigma_p(w) 
= \sqrt{\sum_{n=1}^N\sum_{m=1}^N w_n w_m \sigma_n \sigma_m \cov\left(\Xi_n, \Xi_m\right)},~~~
\beta_p(w) = \frac{\sum_{n=1}^N  w_n\sigma_n\beta_n}{\sigma_p(w)},
\]
and
\begin{align}\label{eq:rho 0,p}
\rho_p(w)&:=\frac{\sum_{n=1}^N  w_n\gamma_n\sigma_n \rho_{0,n}}{\sqrt{\sum_{n=1}^N \sum_{m=1}^N w_n w_m\gamma_n\gamma_m\sigma_n\sigma_m\rho_{n,m}}} 
\end{align}
for $\gamma_n = \sqrt{1-\beta_n^2\left(\frac{2-\alpha}{2\theta}\right)}$.
\end{proposition}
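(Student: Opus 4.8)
The plan is to work directly from the subordinated representation of the stdNTS vector and to exploit the fact that all $N+1$ coordinates are driven by a single common subordinator $\Tau$, which is exactly what keeps the NTS class closed under the aggregation $R_p(w)=\sum_{n=1}^N w_n R_n$. First I would write each coordinate as $R_n=\mu_n+\sigma_n\Xi_n$ with $\Xi_n=\beta_n(\Tau-1)+\gamma_n\varepsilon_n\sqrt{\Tau}$, where $(\varepsilon_0,\dots,\varepsilon_N)$ is jointly normal with covariance $\Sigma$ and is independent of $\Tau$. Substituting into $R_p(w)$ and collecting the coefficient of $(\Tau-1)$ and of $\sqrt{\Tau}$ separately gives
\[
R_p(w)=\mu_p(w)+\Big(\sum_{n=1}^N w_n\sigma_n\beta_n\Big)(\Tau-1)+\Big(\sum_{n=1}^N w_n\sigma_n\gamma_n\varepsilon_n\Big)\sqrt{\Tau},
\]
with $\mu_p(w)=\sum_{n=1}^N w_n\mu_n$ as claimed.

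Next I would normalize the Gaussian part. Because the $\varepsilon_n$ are jointly normal, $G:=\sum_{n=1}^N w_n\sigma_n\gamma_n\varepsilon_n$ is a mean-zero Gaussian with $\var(G)=\sum_{n,m} w_n w_m\sigma_n\sigma_m\gamma_n\gamma_m\rho_{n,m}$, so I would set $\varepsilon_p(w):=G/\sqrt{\var(G)}$, which is $\Phi(0,1)$. Defining $\sigma_p(w)$ and $\beta_p(w)$ as in the statement and putting $\gamma_p(w):=\sqrt{\var(G)}/\sigma_p(w)$, the previous display rewrites as $R_p(w)=\mu_p(w)+\sigma_p(w)\Xi_p(w)$ with $\Xi_p(w):=\beta_p(w)(\Tau-1)+\gamma_p(w)\varepsilon_p(w)\sqrt{\Tau}$.

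The central consistency check is that $\Xi_p(w)$ is a genuine one-dimensional stdNTS variable, i.e. that $\gamma_p(w)=\sqrt{1-\beta_p(w)^2(2-\alpha)/(2\theta)}$. This is where formula \eqref{eq:cov Xi_n and Xi_m} enters: expanding $\sigma_p(w)^2=\sum_{n,m}w_nw_m\sigma_n\sigma_m\cov(\Xi_n,\Xi_m)$ with $\cov(\Xi_n,\Xi_m)=\gamma_n\gamma_m\rho_{n,m}+\beta_n\beta_m(2-\alpha)/(2\theta)$ splits it as $\var(G)+\big(\sum_n w_n\sigma_n\beta_n\big)^2(2-\alpha)/(2\theta)$, equivalently $\sigma_p(w)^2=\sigma_p(w)^2\gamma_p(w)^2+\sigma_p(w)^2\beta_p(w)^2(2-\alpha)/(2\theta)$. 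Dividing by $\sigma_p(w)^2$ yields precisely the required relation between $\gamma_p(w)$ and $\beta_p(w)$, so $\Xi_p(w)\sim\stdNTS_1(\alpha,\theta,\beta_p(w),1)$, consistent with Proposition \ref{pro:mu+sigma stdNTS}. I expect this algebraic identity — that the total variance splits cleanly into a Gaussian part and a subordinator-drift part — to be the main (though routine) obstacle.

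Finally I would assemble the bivariate statement. Both $\Xi_0=\beta_0(\Tau-1)+\gamma_0\varepsilon_0\sqrt{\Tau}$ and $\Xi_p(w)$ are built from the same subordinator $\Tau$ and from the two standard normals $\varepsilon_0,\varepsilon_p(w)$, which are jointly normal (being linear images of $(\varepsilon_0,\dots,\varepsilon_N)$) with unit variances; hence $(\Xi_0,\Xi_p(w))^\tr$ matches the definition of a $\stdNTS_2$ vector, with correlation matrix determined by $\rho_p(w):=\cov(\varepsilon_0,\varepsilon_p(w))$. Computing this last covariance directly gives $\rho_p(w)=\frac{1}{\sqrt{\var(G)}}\sum_{n=1}^N w_n\sigma_n\gamma_n\rho_{0,n}$, and substituting $\sqrt{\var(G)}=\sqrt{\sum_{n,m}w_nw_m\gamma_n\gamma_m\sigma_n\sigma_m\rho_{n,m}}$ reproduces \eqref{eq:rho 0,p}, completing the proof.
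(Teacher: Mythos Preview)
Your proposal is correct and follows essentially the same route as the paper: expand each $\Xi_n$ in the subordinated form, collect the $(\Tau-1)$ and $\sqrt{\Tau}$ parts, use \eqref{eq:cov Xi_n and Xi_m} to verify that $\gamma_p(w)^2=1-\beta_p(w)^2(2-\alpha)/(2\theta)$, and then read off the correlation between the two standardized Gaussians. The only cosmetic difference is that the paper obtains $\rho_p(w)$ by computing $\cov(\Xi_0,\Xi_p(w))$ in two ways and equating, whereas you compute $\cov(\varepsilon_0,\varepsilon_p(w))$ directly; both yield \eqref{eq:rho 0,p} after the identity $\sigma_p(w)\gamma_p(w)=\sqrt{\var(G)}$, which you have already established.
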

\textbf{Remark}
Let $\sigma_* = (\sigma_1, \sigma_2, \cdots, \sigma_N)^\tr$, $\rho^*=(\rho_{0,1}, \rho_{0,2}, \cdots, \rho_{0,N})^\tr$, $\Sigma^* = \left(\rho_{n,m}\right)_{n,m\in\{1,2,\cdots, N\}}$ and $\gamma^* = (\gamma_1, \gamma_2, \cdots, \gamma_N)^\tr$. Then
\begin{align*}
\rho_p(w)&:=\frac{w^\tr V_{\gamma,\sigma, \rho}^*}{\sqrt{w^\tr \Sigma^*_{\gamma, \sigma} w} }
\end{align*}
for 
$V_{\gamma,\sigma, \rho}^* = \diag(\gamma^*)\diag(\sigma^*)\rho^*$ 
and 
$\Sigma_{\gamma, \sigma}^* =  \diag(\gamma^*)\diag(\sigma^*)\Sigma^*\diag(\sigma^*)\diag(\gamma^*)$.

With the positive homogeneity and translation invariance properties of VaR and CoVaR, we have
\[
\VaR_\zeta(R_0) = \sigma_0(w) \VaR_\zeta(\Xi_0) - \mu_0
\]
and 
\[
\CoVaR{\eta}{R_p(w)}{\zeta}{R_0} = \sigma_p(w)\CoVaR{\eta}{\Xi_p(w)}{\zeta}{\Xi_0}-\mu_p(w).
\]  
Since marginal distributions of $\Xi_0$, and $\Xi_p(w)$ are continuous, we have $\VaR_\zeta(\Xi_0) = -F_{\Xi_0}^{-1}(\zeta)$, where $F_{\Xi_0}$ and $F_{\Xi_0}^{-1}$ are the cdf and inverse cdf of $\Xi_0$, respectively.
Moreover, $\CoVaR{\eta}{\Xi_p(w)}{\zeta}{\Xi_0}$ is the value satisfying
\[
\P\left(\Xi_p(w)<-\CoVaR{\eta}{\Xi_p(w)}{\zeta}{\Xi_0}, \Xi_0<-\VaR_\zeta(\Xi_0)\right) = \zeta\eta,
\]
i.e.
\begin{equation}\label{eq:CoVaR Implicit}
F_{(\Xi_0,\Xi_p(w))}\left(-\VaR_\zeta(\Xi_0), -\CoVaR{\eta}{\Xi_p(w)}{\zeta}{\Xi_0}\right) = \zeta\eta,
\end{equation}
where $F_{(\Xi_0,\Xi_p(w))}$ is the cdf of $(\Xi_0,\Xi_p(w))$.
By \eqref{eq:cdf nts}, $F_{(\Xi_0,\Xi_p(w))}$ is given as
\begin{align*}
&F_{(\Xi_0,\Xi_p(w))}(\xi_0, \xi_p) \\
&= \int_0^\infty \int_{-\infty}^{\frac{ \xi_p-\beta_p(w)(t-1)}{\gamma_p(w)\sqrt{t}}} \int_{-\infty}^{\frac{\xi_0-\beta_0(t-1)}{\gamma_0\sqrt{t}}} f^{\Phi_2}_{\rho_p(w)}(x_1, x_2) dx_1 dx_2f_\Tau(t)dt,
\end{align*}
where $\gamma_0= \sqrt{1-\beta_0^2\left(\frac{2\theta}{2-\alpha}\right)}$, $\gamma_p(w)= \sqrt{1-(\beta_p(w))^2\left(\frac{2\theta}{2-\alpha}\right)}$ and $f^{\Phi_2}_\rho$ is the pdf of the bivariate standard normal distribution with covariance $\rho$.

By the same arguments, we have CoCVaR as
\begin{equation}
\label{eq:CoCVaR Portfolio}
\CoCVaR{\eta}{R_p(w)}{\zeta}{R_0} = \sigma_p(w) \CoCVaR{\eta}{\Xi_p(w)}{\zeta}{\Xi_0} - \mu_p(w)
\end{equation}
and
\begin{align*}
\CoCVaR{\eta}{\Xi_p(w)}{\zeta}{\Xi_0}
  =-\frac{1}{\zeta\eta}\int_{-\infty}^{-\CoVaR{\eta}{\Xi_p(w)}{\zeta}{\Xi_0}} \int_{-\infty}^{-\VaR_\zeta(\Xi_0)} x_2 f_{(\Xi_0, \Xi_p(w))}(x_1,x_2) dx_1\,dx_2
\end{align*}
where
\begin{align*}
  &f_{(\Xi_0, \Xi_p(w))}(x_1, x_2) 
  =   \int_0^\infty 
g\left(x_1-\beta_0(t-1),x_2-\beta_p(w)(t-1), \gamma_0\sqrt{t}, \gamma_p(w)\sqrt{t}, \rho_p(w)\right)  f_\Tau(t) dt,
\end{align*}
and
\[
g(x, y, u, v, \rho)=\frac{1}{2\pi u v \sqrt{1-\rho}}  
\exp\left(
	-\frac{1}{2(1-\rho^2)}
	\left(\left(\frac{x}{u}\right)^2-\frac{2\rho xy}{uv}+\left(\frac{y}{v}\right)^2\right)
\right),
\]
by \eqref{eq:pdf nts}. By the change of variables, we have
\begin{align}
\nonumber
&\CoCVaR{\eta}{\Xi_p(w)}{\zeta}{\Xi_0}
\\
\label{eq:CoCVaR Integral Form}
&
  =-\frac{1}{\zeta\eta}\int_0^\infty \int_{-\infty}^{C(w,t)} \int_{-\infty}^{v(t)}
   \left(\beta_p(w)(t-1)+x_2\gamma_p(w)\sqrt{t}\right)f^{\Phi_2}_{\rho_p(w)}(x_1, x_2) dx_1\,dx_2 f_\Tau(t)dt,
\end{align}
or  
\begin{align}
\label{eq:CoCVaR Expectation Form}
&\CoCVaR{\eta}{\Xi_p(w)}{\zeta}{\Xi_0}
  =-\frac{1}{\zeta\eta}E\left[\left(\beta_p(w)(\Tau-1)+\epsilon_p\gamma_p(w)\sqrt{\Tau}\right)1_{\epsilon_p<C(w,\Tau)}1_{\epsilon_0<v(\Tau)}\right]
\end{align}
where 
$C(w,t) = \frac{ -\CoVaR{\eta}{\Xi_p(w)}{\zeta}{\Xi_0}-\beta_p(w)(t-1)}{\gamma_p(w)\sqrt{t}}$, 
$v(t) = \frac{\xi_0-\beta_0(t-1)}{\gamma_0\sqrt{t}}$,
 $\xi_0 = -\VaR_\zeta(\Xi_0)$, $(\epsilon_0, \epsilon_p)$ is the bivariate standard normal random vector with covariance $\rho_p(w)$, and $\Tau$ is the tempered stable subordinator with parameters $(\alpha, \theta)$ independent of $(\epsilon_0, \epsilon_p)$.

\subsection{\label{Sec:MCTCoCVaR}Marginal Contribution to CoVaR and CoCVaR}
In this section, we discuss the marginal contributions to CoVaR and CoCVaR under the $(N+1)$-dimensional NTS market model with the market index defined in the previous section. Let $w_j$ be the $j$th element of the capital allocation vector $w = (w_1$, $\cdots$, $w_N)^\tr\in I^N$ for $j\in\{1$, $2$, $\cdots$, $N\}$.
Let
$u(x,w,t) = \frac{ x-\beta_p(w)(t-1)}{\gamma_p(w)\sqrt{t}}$, 
and $v(t) = \frac{v_0-\beta_0(t-1)}{\gamma_0\sqrt{t}}$,
where $v_0 = -\VaR_\eta(\Xi_0)$, 
$\gamma_p(w) = \sqrt{1-(\beta_p(w))^2\left(\frac{2\theta}{2-\alpha}\right)}$,
and $\gamma_0 = \sqrt{1-\beta_0^2\left(\frac{2\theta}{2-\alpha}\right)}$.
Let $f^{\Phi_2}_{\rho_p(w)}$be the bivariate standard normal pdf with covariance $\rho_p(w)$, that is
\[
f^{\Phi_2}_{\rho_p(w)}(x_1,x_2)=
\frac{1}{2\pi\sqrt{1-(\rho_p(w))^2}}\exp\left(-\frac{x_1^2-2\rho_p(w) x_1 x_2+x_2^2}{2(1-(\rho_p(w))^2)}\right).
\]
Define
\begin{align*}
&G(x, w) 
= \int_0^\infty 
\int_{-\infty}^{u(x,w,t)} 
\int_{-\infty}^{v(t)} f^{\Phi_2}_{\rho_p(w)}(x_1,x_2)dx_1 dx_2  f_\Tau(t)dt - \eta\zeta.
\end{align*}
Then
$G(x, w) = 0$
if $x = -\CoVaR{\eta}{\Xi_p(w)}{\zeta}{\Xi_0}$ by \eqref{eq:CoVaR Implicit}.

\begin{proposition}\label{prop:mctcovar}
The marginal contribution to CoVaR for the $j$th element of the portfolio capital allocation weight vector is
\begin{align}\label{eq:mctcovar}
\frac{\partial}{\partial w_j}\CoVaR{\eta}{\Xi_p(w)}{\zeta}{\Xi_0} = \frac{\frac{\partial }{\partial w_j} G(x,w) }{\frac{\partial }{\partial x} G(x,w)}\Bigg|_{x = -\CoVaR{\eta}{\Xi_p(w)}{\zeta}{\Xi_0}}
\end{align}
with
\begin{align}\label{eq:dGdx}
&\frac{\partial}{\partial x} G(x,w)
=\frac{1}{\sqrt{2\pi}\gamma_p(w)}E\left[\frac{1}{\sqrt{\Tau}}\exp\left(-\frac{\left(u(x,w,\Tau)\right)^2}{2}\right) 
F_{\Phi}\left(\frac{v(\Tau) -\rho_p(w)u(x,w,\Tau)} {\sqrt{1-(\rho_p(w))^2}}\right)  \right]
\end{align}
and
\begin{align}
&\frac{\partial }{\partial w_j} G(x,w) \label{eq:dGdwj} 
 =\frac{\rho_p(w)}{1-(\rho_p(w))^2} \left(\frac{\partial}{\partial w_j}\rho_p(w) \right)F_{(\Xi_0, \Xi_p(w))}\left(v_0, x\right)
 \\
\nonumber  
&+ \frac{\frac{\partial}{\partial w_j}\rho_p(w)}{(1-(\rho_p(w))^2)^2} E\left[(\rho_p(w)\epsilon_p^2-(1+(\rho_p(w))^2)\epsilon_p\epsilon_0+\rho_p(w)
\epsilon_0^2)1_{\epsilon_p<u(x,w,\Tau)}1_{\epsilon_0<v(\Tau)}\right]
\\
\nonumber 
&+\frac{1}{\sqrt{2\pi}}E\left[\exp\left(-\frac{\left(u(x,w,\Tau)\right)^2}{2}\right) 
F_{\Phi}\left(\frac{v(\Tau) -\rho_p(w)u(x,w,\Tau)} {\sqrt{1-(\rho_p(w))^2}}\right)  
\frac{\partial}{\partial w_j}u(x,w,\Tau)\right]
\end{align}
where

\begin{itemize}
\item  $F_{\Phi}$ is the cdf of the standard normal distribution,
\item  $\frac{\partial}{\partial w_j}\rho_p(w)$ and $\frac{\partial}{\partial w_j}u(x,w,t)$ are given by \eqref{eq:d rho d wj} and \eqref{eq:d u d wj} in Appendix, respectively, 
\item $(\epsilon_0, \epsilon_p)$ is the bivariate standard normal distributed random vector with covariance $\rho_p(w)$, 
\item and $\Tau$ is the tempered stable subordinator with parameters $(\alpha, \theta)$ independent of $(\epsilon_0, \epsilon_p)$.
\end{itemize}
Here, $F_{(\Xi_0, \Xi_p(w))}\left(v_0, x\right)=\eta\zeta$ if $x = -\CoVaR{\eta}{\Xi_p(w)}{\zeta}{\Xi_0}$.
\end{proposition}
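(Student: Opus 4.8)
The plan is to read \eqref{eq:mctcovar} as an instance of the implicit function theorem and then to compute the two partial derivatives of $G$ by hand. Because $G(x,w)=0$ exactly when $x=-\CoVaR{\eta}{\Xi_p(w)}{\zeta}{\Xi_0}$, setting $x^\ast(w)=-\CoVaR{\eta}{\Xi_p(w)}{\zeta}{\Xi_0}$ and differentiating the identity $G(x^\ast(w),w)=0$ in $w_j$ gives $\frac{\partial}{\partial x}G\cdot\frac{\partial x^\ast}{\partial w_j}+\frac{\partial}{\partial w_j}G=0$, so that $\frac{\partial}{\partial w_j}\CoVaR{\eta}{\Xi_p(w)}{\zeta}{\Xi_0}=-\frac{\partial x^\ast}{\partial w_j}=\big(\frac{\partial}{\partial w_j}G\big)\big/\big(\frac{\partial}{\partial x}G\big)$ at $x=x^\ast(w)$, which is \eqref{eq:mctcovar}. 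The step is legitimate once we check $\frac{\partial}{\partial x}G\neq 0$, and the expression \eqref{eq:dGdx} computed next is manifestly strictly positive (a positive integrand times $F_{\Phi}>0$ integrated against $f_\Tau$), so the quotient is well defined.

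For \eqref{eq:dGdx} I would note that $x$ enters $G$ only through the upper limit $u(x,w,t)$ of the outer integral, so Leibniz's rule yields $\frac{\partial}{\partial x}G=\int_0^\infty \frac{\partial u}{\partial x}\big(\int_{-\infty}^{v(t)}f^{\Phi_2}_{\rho_p(w)}(x_1,u(x,w,t))\,dx_1\big)f_\Tau(t)\,dt$ with $\frac{\partial u}{\partial x}=\frac{1}{\gamma_p(w)\sqrt t}$. The inner integral is evaluated by the conditional factorization of the bivariate normal density, $f^{\Phi_2}_{\rho}(x_1,x_2)=\frac{1}{\sqrt{2\pi}}e^{-x_2^2/2}\cdot\frac{1}{\sqrt{2\pi(1-\rho^2)}}\exp\!\big(-\frac{(x_1-\rho x_2)^2}{2(1-\rho^2)}\big)$, which gives
\[
\int_{-\infty}^{v}f^{\Phi_2}_{\rho}(x_1,u)\,dx_1=\frac{1}{\sqrt{2\pi}}\,e^{-u^2/2}\,F_{\Phi}\!\left(\frac{v-\rho u}{\sqrt{1-\rho^2}}\right).
\]
Substituting and rewriting the $t$-integral against $f_\Tau$ as an expectation over the subordinator $\Tau$ produces \eqref{eq:dGdx}.

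For \eqref{eq:dGdwj} I would isolate the three ways $w_j$ enters $G$: through the correlation $\rho_p(w)$ inside the density, through the upper limit $u(x,w,t)$ (via $\beta_p(w)$ and $\gamma_p(w)$), and---importantly---not through $v(t)$, which depends only on index parameters and is $w$-independent. The limit contribution reproduces the third line of \eqref{eq:dGdwj} by the same Leibniz-plus-factorization computation as above, now with $\frac{\partial}{\partial w_j}u$ in place of $\frac{\partial}{\partial x}u$. The correlation contribution is the core of the argument: differentiating under the integral sign leaves $\frac{\partial}{\partial w_j}\rho_p(w)\int_0^\infty\int_{-\infty}^{u}\int_{-\infty}^{v}\frac{\partial}{\partial\rho}f^{\Phi_2}_{\rho}(x_1,x_2)\big|_{\rho=\rho_p(w)}\,dx_1\,dx_2\,f_\Tau(t)\,dt$, and a direct computation gives
\[
\frac{\partial}{\partial\rho}f^{\Phi_2}_{\rho}(x_1,x_2)=f^{\Phi_2}_{\rho}(x_1,x_2)\left[\frac{\rho}{1-\rho^2}+\frac{(1+\rho^2)x_1x_2-\rho(x_1^2+x_2^2)}{(1-\rho^2)^2}\right].
\]
The leading $\frac{\rho}{1-\rho^2}$ factor simply rescales the truncated density integral, which by comparison with the cdf formula \eqref{eq:cdf nts} equals $F_{(\Xi_0,\Xi_p(w))}(v_0,x)$; this is the first line of \eqref{eq:dGdwj}. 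The quadratic factor, after writing the remaining truncated integrals as an expectation of the quadratic in $(\epsilon_0,\epsilon_p)$ against the two indicators $1_{\epsilon_p<u}1_{\epsilon_0<v}$ over the joint law of $(\Tau,\epsilon_0,\epsilon_p)$, yields the second line.

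The main obstacle is this correlation derivative, on two counts: one must justify differentiation under both the $dx_1\,dx_2$ and the $dt$ integrals, which follows from dominated convergence since $f^{\Phi_2}_{\rho}$ and its $\rho$-derivative are bounded with rapid decay while $f_\Tau$ is a probability density, and one must carry out the algebra for $\frac{\partial}{\partial\rho}f^{\Phi_2}_{\rho}$ without error. A useful internal check is Plackett's identity $\frac{\partial}{\partial\rho}f^{\Phi_2}_{\rho}=\frac{\partial^2}{\partial x_1\partial x_2}f^{\Phi_2}_{\rho}$, which reproduces the quadratic coefficient $(1+\rho^2)x_1x_2-\rho(x_1^2+x_2^2)$ independently. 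The remaining ingredients $\frac{\partial}{\partial w_j}\rho_p(w)$ and $\frac{\partial}{\partial w_j}u(x,w,t)$ are routine quotient-rule differentiations of $\rho_p(w)=w^{\tr}V^\ast_{\gamma,\sigma,\rho}/\sqrt{w^{\tr}\Sigma^\ast_{\gamma,\sigma}w}$, $\beta_p(w)$, and $\gamma_p(w)$, which I would relegate to \eqref{eq:d rho d wj} and \eqref{eq:d u d wj} in the Appendix, and the final evaluation $F_{(\Xi_0,\Xi_p(w))}(v_0,x)=\eta\zeta$ at $x=-\CoVaR{\eta}{\Xi_p(w)}{\zeta}{\Xi_0}$ is exactly \eqref{eq:CoVaR Implicit}.
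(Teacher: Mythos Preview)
Your proposal is correct and follows essentially the same route as the paper: implicit differentiation of $G(x,w)=0$ to obtain \eqref{eq:mctcovar}, Leibniz's rule for the $u$-limit contribution combined with the conditional factorization of $f^{\Phi_2}_{\rho}$ to get \eqref{eq:dGdx} and the third line of \eqref{eq:dGdwj}, and a direct computation of $\partial_\rho f^{\Phi_2}_{\rho}$ for the first two lines. Your added touches---the positivity check on $\partial_x G$, the dominated-convergence justification, and the Plackett cross-check---are not in the paper but only strengthen the argument.
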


\begin{proposition}\label{prop:mctCoCVaR}
The marginal contribution to CoCVaR for the $j$-th element of the portfolio capital allocation weight vector is
\begin{align}
\label{eq:mctcocvar}
&\frac{\partial}{\partial w_j}\CoCVaR{\eta}{\Xi_p(w)}{\zeta}{\Xi_0} 
\\
\nonumber
&=
-\frac{1}{\eta\zeta}\Bigg(
E\left[\left((\Tau-1)-\frac{\epsilon_p\beta_p(w)\left(\frac{2\theta}{2-\alpha}\right)\sqrt{\Tau}}{\gamma_p(w)}\right) 1_{\epsilon_p<C(w,\Tau)}1_{\epsilon_0<v(\Tau)}\right]\frac{\partial}{\partial w_j} \beta_p(w)
\\
\nonumber
&
\hspace{1.5cm}
- E\Bigg[ \left(\frac{\rho_p(w)\epsilon_0^2-(1+(\rho_p(w))^2)\epsilon_0 \epsilon_p+\rho_p(w) \epsilon_p^2}{(1-(\rho_p(w))^2)^2}\right)
\\
\nonumber
&
\hspace{3cm}\times
(\xi_p(w)+\CoCVaR{\eta}{\Xi_p(w)}{\zeta}{\Xi_0}) 1_{\epsilon_p<C(w,\Tau)}1_{\epsilon_0<v(\Tau)}\Bigg]\frac{\partial}{\partial w_j} \rho_p(w)
\Bigg).
\end{align}
where 
\begin{itemize}
\item $C(w,t)=u(- \CoVaR{\eta}{\Xi_p(w)}{\zeta}{\Xi_0}, w, t)$,
\item $\xi_p(w) = \beta_p(w)(\Tau-1)+\epsilon_p \gamma_p(w)\sqrt{\Tau}$,
\item $\frac{\partial}{\partial w_j} \beta_p(w)$ and $\frac{\partial}{\partial w_j}\rho_p(w)$ are given by \eqref{eq:d beta d wj} and \eqref{eq:d rho d wj} in Appendix, respectively,
\item $(\epsilon_0, \epsilon_p)$ is the bivariate standard normal distributed random vector with covariance $\rho_p(w)$, 
\item and $\Tau$ is the tempered stable subordinator with parameters $(\alpha, \theta)$ independent of $(\epsilon_0, \epsilon_p)$.
\end{itemize}
\end{proposition}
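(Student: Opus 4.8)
The plan is to differentiate the expectation (equivalently integral) representation of CoCVaR under the integral sign. Writing, from \eqref{eq:CoCVaR Integral Form},
\[
\CoCVaR{\eta}{\Xi_p(w)}{\zeta}{\Xi_0} = -\frac{1}{\zeta\eta}\int_0^\infty\int_{-\infty}^{C(w,t)}\int_{-\infty}^{v(t)}\left(\beta_p(w)(t-1)+x_2\gamma_p(w)\sqrt{t}\right)f^{\Phi_2}_{\rho_p(w)}(x_1,x_2)\,dx_1\,dx_2\,f_\Tau(t)\,dt,
\]
I would isolate the three ways $w_j$ enters: (i) explicitly through $\beta_p(w)$ and $\gamma_p(w)$ in the integrand; (ii) through the correlation $\rho_p(w)$ in the bivariate Gaussian density $f^{\Phi_2}_{\rho_p(w)}$; and (iii) through the variable upper limit $C(w,t)$ of the inner integral, which itself carries the $w$-dependence of $\beta_p$, $\gamma_p$, and of $\CoVaR{\eta}{\Xi_p(w)}{\zeta}{\Xi_0}$. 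Since $v(t)$ and $v_0$ depend only on $\Xi_0$, that limit contributes nothing.

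For (i) I would differentiate $\beta_p(w)(t-1)+x_2\gamma_p(w)\sqrt{t}$ and eliminate $\partial_{w_j}\gamma_p$ via the chain rule $\partial_{w_j}\gamma_p(w) = -\frac{\beta_p(w)}{\gamma_p(w)}\left(\frac{2\theta}{2-\alpha}\right)\partial_{w_j}\beta_p(w)$ coming from $\gamma_p(w)^2 = 1-\beta_p(w)^2\left(\frac{2\theta}{2-\alpha}\right)$; collecting the two pieces yields the factor $(\Tau-1)-\epsilon_p\beta_p(w)\left(\frac{2\theta}{2-\alpha}\right)\sqrt{\Tau}/\gamma_p(w)$ multiplying $\partial_{w_j}\beta_p(w)$, i.e. the first expectation in \eqref{eq:mctcocvar}. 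For (ii) I would use the elementary identity for the bivariate standard normal density
\[
\frac{\partial}{\partial\rho}f^{\Phi_2}_{\rho}(x_1,x_2) = f^{\Phi_2}_{\rho}(x_1,x_2)\left(\frac{\rho}{1-\rho^2}-\frac{\rho x_1^2-(1+\rho^2)x_1 x_2+\rho x_2^2}{(1-\rho^2)^2}\right),
\]
whose quadratic kernel (with $x_1=\epsilon_0$, $x_2=\epsilon_p$) is exactly the one weighting the $\partial_{w_j}\rho_p$ expectation in \eqref{eq:mctcocvar}.

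The hard part is (iii), the boundary term from $C(w,t)$, since it implicitly carries $\partial_{w_j}\CoVaR{\eta}{\Xi_p(w)}{\zeta}{\Xi_0}$, which is absent from the final formula. By Leibniz's rule this contribution is
\[
-\frac{1}{\zeta\eta}\int_0^\infty \left(\beta_p(w)(t-1)+C(w,t)\gamma_p(w)\sqrt{t}\right)\left(\int_{-\infty}^{v(t)}f^{\Phi_2}_{\rho_p(w)}(x_1,C(w,t))\,dx_1\right)\frac{\partial C(w,t)}{\partial w_j}\,f_\Tau(t)\,dt,
\]
and the key simplification is that the integrand evaluated at $x_2=C(w,t)$ equals the constant $-\CoVaR{\eta}{\Xi_p(w)}{\zeta}{\Xi_0}$, independent of $t$. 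I would then remove the $\partial_{w_j}\CoVaR$ hidden in $\partial_{w_j}C$ by differentiating the CoVaR-defining constraint $G(-\CoVaR{\eta}{\Xi_p(w)}{\zeta}{\Xi_0},w)=0$ from Proposition \ref{prop:mctcovar} (equivalently \eqref{eq:CoVaR Implicit}), which converts the entire boundary contribution into a multiple of $\partial_{w_j}\rho_p$ times a density integral.

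Finally I would merge the surviving pieces. The boundary contribution and the ``$\rho/(1-\rho^2)$'' piece of the density derivative are constant-in-$(x_1,x_2)$ multiples of the Gaussian mass, so the two defining identities $E[1_{\epsilon_p<C(w,\Tau)}1_{\epsilon_0<v(\Tau)}]=\eta\zeta$ and $E[\xi_p(w)1_{\epsilon_p<C(w,\Tau)}1_{\epsilon_0<v(\Tau)}]=-\eta\zeta\,\CoCVaR{\eta}{\Xi_p(w)}{\zeta}{\Xi_0}$ let me re-express the constant offset through CoCVaR, upgrading $\xi_p(w)$ to $\xi_p(w)+\CoCVaR{\eta}{\Xi_p(w)}{\zeta}{\Xi_0}$ inside the $\partial_{w_j}\rho_p$ expectation. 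The obstacle I most anticipate is precisely this reconciliation: one must track carefully whether the moving limit $C(w,t)$ leaves an additional corner term $\int_0^\infty f^{\Phi_2}_{\rho_p(w)}(v(t),C(w,t))f_\Tau(t)\,dt$ and check that it is absorbed by the constraint rather than surviving as a separate summand, and the sign bookkeeping across $\partial_{w_j}\beta_p$, $\partial_{w_j}\gamma_p$, and $\partial_{w_j}\rho_p$ is where errors are most likely to arise.
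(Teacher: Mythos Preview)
Your plan is essentially the paper's proof: differentiate the integral representation \eqref{eq:CoCVaR Integral Form} under the integral sign, separate the three $w_j$-channels (i)--(iii) exactly as you list them, eliminate $\partial_{w_j}\gamma_p$ via \eqref{eq:d gamma d wj}, and use the $\partial_\rho$-identity \eqref{eq:d f 2-dim normal dy} for the Gaussian density. The one organisational difference is that the paper writes $\CoCVaR{\eta}{\Xi_p(w)}{\zeta}{\Xi_0}=-H(w)/F(w)$ with $F(w)=F_{(\Xi_0,\Xi_p(w))}\big(v_0,-\CoVaR{\eta}{\Xi_p(w)}{\zeta}{\Xi_0}\big)$ and applies the quotient rule, differentiating $H$ and $F$ separately (both carrying the full $w$-dependence of $C(w,t)$, including that through $\CoVaR{\eta}{\Xi_p(w)}{\zeta}{\Xi_0}$). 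Since $F(w)\equiv\eta\zeta$, the extra term $\CoCVaR{\eta}{\Xi_p(w)}{\zeta}{\Xi_0}\cdot\partial_{w_j}F$ is formally zero, but computing it as in \eqref{eq:dF(w)dwj} and adding it supplies the cancellation of the $\rho_p/(1-\rho_p^2)$ pieces and manufactures the ``$+\CoCVaR{\eta}{\Xi_p(w)}{\zeta}{\Xi_0}$'' shift inside the second expectation automatically, rather than via the two identities you planned to invoke by hand.

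Your concern about the boundary contribution from $C(w,t)$ is well founded and is not something the paper resolves more cleanly than you do. In the paper's combination $\partial_{w_j}H+\CoCVaR{\eta}{\Xi_p(w)}{\zeta}{\Xi_0}\cdot\partial_{w_j}F$, the boundary integrals from $\partial_{w_j}H$ (with coefficient $-\CoVaR{\eta}{\Xi_p(w)}{\zeta}{\Xi_0}$) and from $\partial_{w_j}F$ (with coefficient $\CoCVaR{\eta}{\Xi_p(w)}{\zeta}{\Xi_0}$) combine to $\big(\CoCVaR{\eta}{\Xi_p(w)}{\zeta}{\Xi_0}-\CoVaR{\eta}{\Xi_p(w)}{\zeta}{\Xi_0}\big)$ times the common boundary expectation, and the paper's final ``substituting'' step simply discards this residual without comment. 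Also note that the boundary piece is not purely a multiple of $\partial_{w_j}\rho_p$ as you wrote: through $\partial_{w_j}u$ in \eqref{eq:d u d wj} it carries a $\partial_{w_j}\beta_p$ contribution as well, so your bookkeeping there needs to track both.
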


We deduce the marginal contribution to CoVaR as follows:
\begin{align*}
&\frac{\partial}{\partial w_j}\CoVaR{\eta}{R_p(w)}{\zeta}{R_0}
\\
&=\CoVaR{\eta}{\Xi_p(w)}{\zeta}{\Xi_0}\frac{\partial}{\partial w_j}\sigma_p(w) 
+\sigma_p(w)\frac{\partial}{\partial w_j}\CoVaR{\eta}{\Xi_p(w)}{\zeta}{\Xi_0}
-\frac{\partial}{\partial w_j}\mu_p(w),
\end{align*} 
or
\begin{align}
\label{eq:mct CoVaR Portfolio}
&\frac{\partial}{\partial w_j}\CoVaR{\eta}{R_p(w)}{\zeta}{R_0}
\\
\nonumber
&=\frac{\sum_{n=1}^Nw_n\sigma_n\sigma_j\cov(\Xi_n,\Xi_j)}{\sigma_p(w)}\CoVaR{\eta}{\Xi_p(w)}{\zeta}{\Xi_0}
+\sigma_p(w)\frac{\partial}{\partial w_j}\CoVaR{\eta}{\Xi_p(w)}{\zeta}{\Xi_0}
-\mu_j
\end{align} 
where $\frac{\partial}{\partial w_j}\CoVaR{\eta}{\Xi_p(w)}{\zeta}{\Xi_0}$ is given as \eqref{eq:mctcovar} in Proposition \ref{prop:mctcovar}.
By the same arguments, we obtain the marginal contribution to CoCVaR as follows
\begin{align}
\label{eq:mct CoCVaR Portfolio}
&\frac{\partial}{\partial w_j}\CoCVaR{\eta}{R_p(w)}{\zeta}{R_0}
\\
\nonumber
&=\frac{\sum_{n=1}^Nw_n\sigma_n\sigma_j\cov(\Xi_n,\Xi_j)}{\sigma_p(w)}\CoCVaR{\eta}{\Xi_p(w)}{\zeta}{\Xi_0}
+\sigma_p(w)\frac{\partial}{\partial w_j}\CoCVaR{\eta}{\Xi_p(w)}{\zeta}{\Xi_0}
-\mu_j,
\end{align} 
where $\frac{\partial}{\partial w_j}\CoVaR{\eta}{\Xi_p(w)}{\zeta}{\Xi_0}$ is given as \eqref{eq:mctcocvar} in Proposition \ref{prop:mctCoCVaR}.

\begin{table}[t]
\centering
\begin{tabular}{cc|cc}
\hline
Company & Symbol & Company & Symbol \\
\hline
\hline
3M	&	 MMM	&
American Express	&	 AXP	\\
Amgen	& AMGN &
Apple Inc.	&	AAPL	 \\
Boeing	&	 BA	&
Caterpillar Inc.	&	 CAT	\\
Chevron Corporation	&	 CVX	&
Cisco Systems	&	CSCO	\\
The Coca-Cola Company	&	 KO	&
DuPont de Nemours Inc.	 &	 DD	\\
Goldman Sachs	&	 GS	&
The Home Depot	&	 HD	\\
Honeywell	& HON
IBM	&	 IBM	\\
Intel	&	INTC	&
Johnson \& Johnson	&	 JNJ	\\
JPMorgan Chase	&	 JPM	&
McDonald's	&	 MCD	\\
Merck \& Co.	&	 MRK	&
Microsoft	&	MSFT	\\
Nike	 &	 NKE	&
Procter \& Gamble	&	 PG	\\
Salesforce	& CRM &
The Travelers Companies	&	 TRV	\\
United Health Group	&	 UNH	&
Verizon	&	 VZ	\\
Visa Inc.	&	 V	&
Walmart	&	 WMT	\\
Walgreens Boots Alliance	&	WBA	&
The Walt Disney Company	&	 DIS	\\
\hline
\end{tabular}
\caption{\label{table:DJIA Members}Companies and symbols of 30 Stocks. They are selected based on the components for DJIA index as of 2022, but Dow Inc.(DOW) in the components is replaced by DuPont de Nemours Inc.(DD).}
\end{table}

\section{Empirical Illustration}
We fit the parameters of the NTS market model to Dow Johns Industrial Average (DJIA) index and 30 major stocks\footnote{The selected 30 stocks are the components of the DJIA index as of 2022. Since Dow Inc.(DOW) in the components does not have enough history, it is replaced by DuPont de Nemours Inc.(DD).} in the U.S. stock market. The company names and symbols of those 30 stocks are listed in Table \ref{table:DJIA Members}.

The parameter estimation method in this section is similar as the method in \cite{kim2022portfolio}. 
We take the set of daily log returns for the DJIA index and each stock from 11/27/2018 to 11/15/2022
and calculate sample means and sample standard deviations for each stock and the index.
The residuals are extracted by the z-score, and then fit the stdNTS parameters to the residuals of each stock (or index) return.
The curve fit method is used between the cdf of the stdNTS distribution obtained by \eqref{eq:cdfStdNTS} and the empirical cdf obtained by the kernel density estimation. 
The same as \cite{kim2022portfolio}, we use the index-based method with DJIA index data in order to find $\alpha$ and $\theta$ and then estimate the $\beta$ vector and $\Sigma$ matrix. That is, we find the 30-dimensional stdNTS parameters as the following two-step method:
\begin{itemize}
\item[] \textbf{Step 1} Find $(\alpha, \theta, \beta_0)$ using the curve fit method between the empirical cdf and stdNTS cdf for the residual of DJIA index. The parameters $(\alpha, \theta)$ are considered the parameters of the tempered stable subordinator.
\item[] \textbf{Step 2} Taking $(\alpha, \theta)$ estimated at Step 1, find $\beta_n$ by applying the curve fit method with the fixed $\alpha$ and $\theta$ for each $n$-th stock returns $n\in\{1,2,\cdots, 30\}$.
\item[] \textbf{Step 3} Find sample covariance matrix $(\cov(\Xi_n, \Xi_m))_{n,m\in\{0,1,\cdots, 30\}}$ using the standardized residual. Here the index $0$ is assigned to the DJIA index. Find $\rho_{n,m}$ using \eqref{eq:cov Xi_n and Xi_m}.
\end{itemize}
The parameters of the tempered stable subordinator in this investigation are $\alpha=1.1835$ and $\theta=0.0820$.
The other estimated parameters are presented in Table \ref{table:ParamEst} with Kolmogorov-Smirnov (KS) p-values present in the table for the goodness of fit test.

\subsection{\label{sec:mcs CoVaR CoCVaR}Calculating CoCVaR, MCT-CoVaR and MCT-CoCVaR with MCS}
In this section, we discuss MCS to find portfolio CoCVaR, MCT-CoVaR and MCT-CoCVaR using the parameters in Table \ref{table:ParamEst} on the NTS market model. 
Here, we use R language version 4.2.2 running on MS-Windows$^\text{\textcircled{R}}$ 10 operating system with the processor Intel$^\text{\textcircled{R}}$ Core i7-4790, 3.60GHz.
To calculate CVaR of the portfolio, we use equation \eqref{eq:CoCVaR Portfolio} and $\CoCVaR{\eta}{\Xi_p(w)}{\zeta}{\Xi_0}$ is obtained by a multiple-integral form \eqref{eq:CoCVaR Integral Form} or an expectation form \eqref{eq:CoCVaR Expectation Form}. 

Assume that we have an equally weighted portfolio $R_p(w)$ with the 30 stocks in Table \ref{table:ParamEst}, i.e. $w_n = 1/30$ for $n \in \{ 1,2,\cdots, 30\}$. 
We apply Proposition \ref{prop:Index and Portfolio} to find $\mu_p(w)$, $\sigma_p(w)$, $\beta_p(w)$, $\rho_p(w)$, $\Xi_p(w)$ and $\Xi_0(w)$.
Let $M$ be the sample size of the simulation and generate two sets of independent pairs of standard normal random numbers, 
$
S_0=\left\{(\epsilon_{0,m},\epsilon_{1,m})\sim \Phi\left(0,\Sigma_\epsilon \right) | m = 1,2,\cdots, M \right\}
$
with $\Sigma_\epsilon=\begin{footnotesize}\left(\begin{matrix}
1, & 0 \\ 0, &1
\end{matrix}\right)\end{footnotesize}$
 and generate one set of tempered stable subordinators, $T=\{\Tau_m|m = 1,2,\cdots, M\}$, which are independent of $S_0$.
Considering the correlation $\rho_p(w)$, we set $S_p=\{(\epsilon_{0,m},\epsilon_{p,m})|\epsilon_{p,m}=\rho_p(w)\epsilon_{0,m}+\sqrt{1-(\rho_p(w))^2}\epsilon_{1,m}, (\epsilon_{0,m}, \epsilon_{1,m})\in S_0\}$.

\begin{figure}[t]
\centering
\includegraphics[width = 12cm]{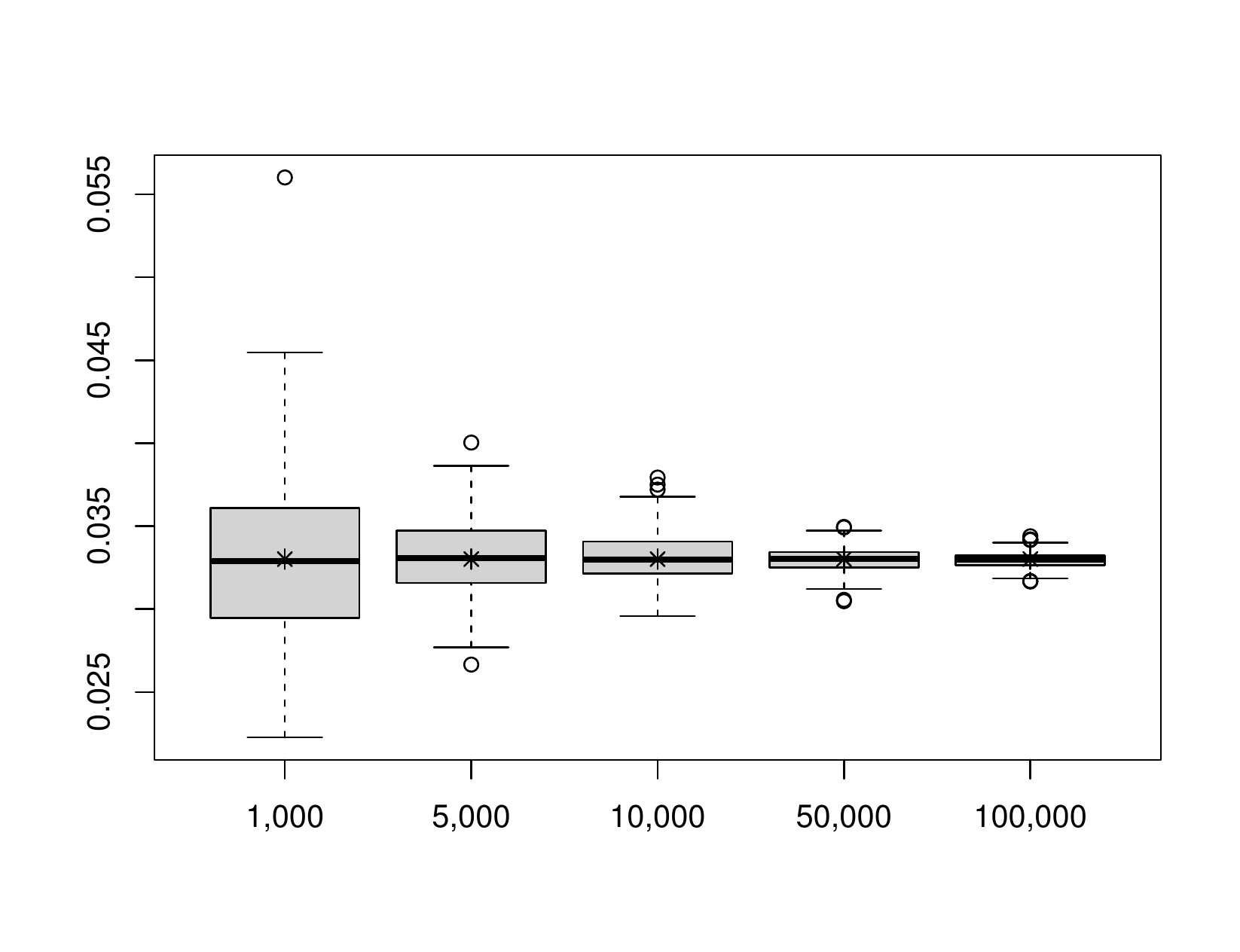}
\caption{\label{fig:BootStrapping_CoCVaR}Boot strapping for $\CoCVaR{\eta}{R_p(w)}{\zeta}{R_0}$. The x-axis is the sample size. Box plots for 100 MCS prices are presented on the plate for each sample size.}
\end{figure}

We obtain $\CoCVaR{\eta}{R_p(w)}{\zeta}{R_0}$ as
\begin{equation}\label{eq:mcs cocvar}
-\mu_p(w)-\frac{\sigma_p(w)}{\zeta\eta M}\sum_{m=1}^M\left(\beta_p(w)(\Tau_m-1)+\epsilon_{p,m}\gamma_p(w)\sqrt{\Tau_m}\right)1_{\epsilon_{p,m}<C(w,\Tau_m)}1_{\epsilon_0<v(\Tau_m)}
\end{equation}
for $(\epsilon_{0,m},\epsilon_{p,m})\in S_p$ and $\Tau_m\in T$, by MCS and \eqref{eq:CoCVaR Portfolio}.
We check the convergence of MCS using bootstrapping.
We draw the first boxplot from the left by repeating the MCS 100 times using \eqref{eq:mcs cocvar} for a given sample size $M=1000$ in Figure \ref{fig:BootStrapping_CoCVaR}. In addition, we draw the other boxplots for each sample size $M\in\{ 5000, 10000, 50000, 100000\}$, which are presented sequentially after the first box plot in the figure. 
We also show (by the symbol `*’) the CoCVaR computed by numerical integration based on \eqref{eq:CoCVaR Integral Form}. 
We observe that, as the sample size increases, the interquartile distance of MCS CoCVaR narrows, and dispersions are reduced. All box plots contain the CoCVaR computed by the numerical integration in the interquartile range.
The multiple-integral form \eqref{eq:CoCVaR Integral Form} takes a relatively longer numerical calculation time of 18.13 seconds,
while the MCS takes 7.57 seconds for a 100,000 sample size.

Using the same arguments, we can calculate MCT-CoVaR and MCT-CoCVaR using MCS.
We calculate the expectations in equations in Proposition \ref{prop:mctcovar} and \ref{prop:mctCoCVaR} use the random numbers in $S_p$, and $T$,
and substitute those MCS expectations into \eqref{eq:dGdx}, \eqref{eq:dGdwj}, and \eqref{eq:mctcocvar} to obtain $\frac{\partial}{\partial w_j}\CoVaR{\eta}{\Xi_p(w)}{\zeta}{\Xi_0}$ and $\frac{\partial}{\partial w_j}\CoCVaR{\eta}{\Xi_p(w)}{\zeta}{\Xi_0}$. Finally, we obtain $\frac{\partial}{\partial w_j}\CoVaR{\eta}{R_p(w)}{\zeta}{R_0}$ and $\frac{\partial}{\partial w_j}\CoCVaR{\eta}{R_p(w)}{\zeta}{R_0}$ by substituting $\frac{\partial}{\partial w_j}\CoVaR{\eta}{\Xi_p(w)}{\zeta}{\Xi_0}$ and $\frac{\partial}{\partial w_j}\CoCVaR{\eta}{\Xi_p(w)}{\zeta}{\Xi_0}$ into \eqref{eq:mct CoVaR Portfolio}, and \eqref{eq:mct CoCVaR Portfolio}, respectively.

The result of MCT-CoVaR and MCT-CoCVaR values using MCS for each individual stocks of the equally weighted portfolio $R_p(w)$ are exhibited in Table \ref{table:ParamEst} with  the rank of the values for the ascending order.
For example, MCT-CoVaR and MCT-CoCVaR values for MRK rank 1st and those values of AXP rank 30th, respectively.
DD ranks 2nd for MCT-CoVaR but 4th for MCT-CoCVaR and so on. We can see that AXP is largest CoVaR and CoCVaR contributor, and is recommended to reduce the proportion of the capital allocation. This idea will be discussed in the Risk Budgeting, below.

\subsection{Portfolio Optimization}
\begin{figure}[ht]
\centering
\includegraphics[width = 12cm]{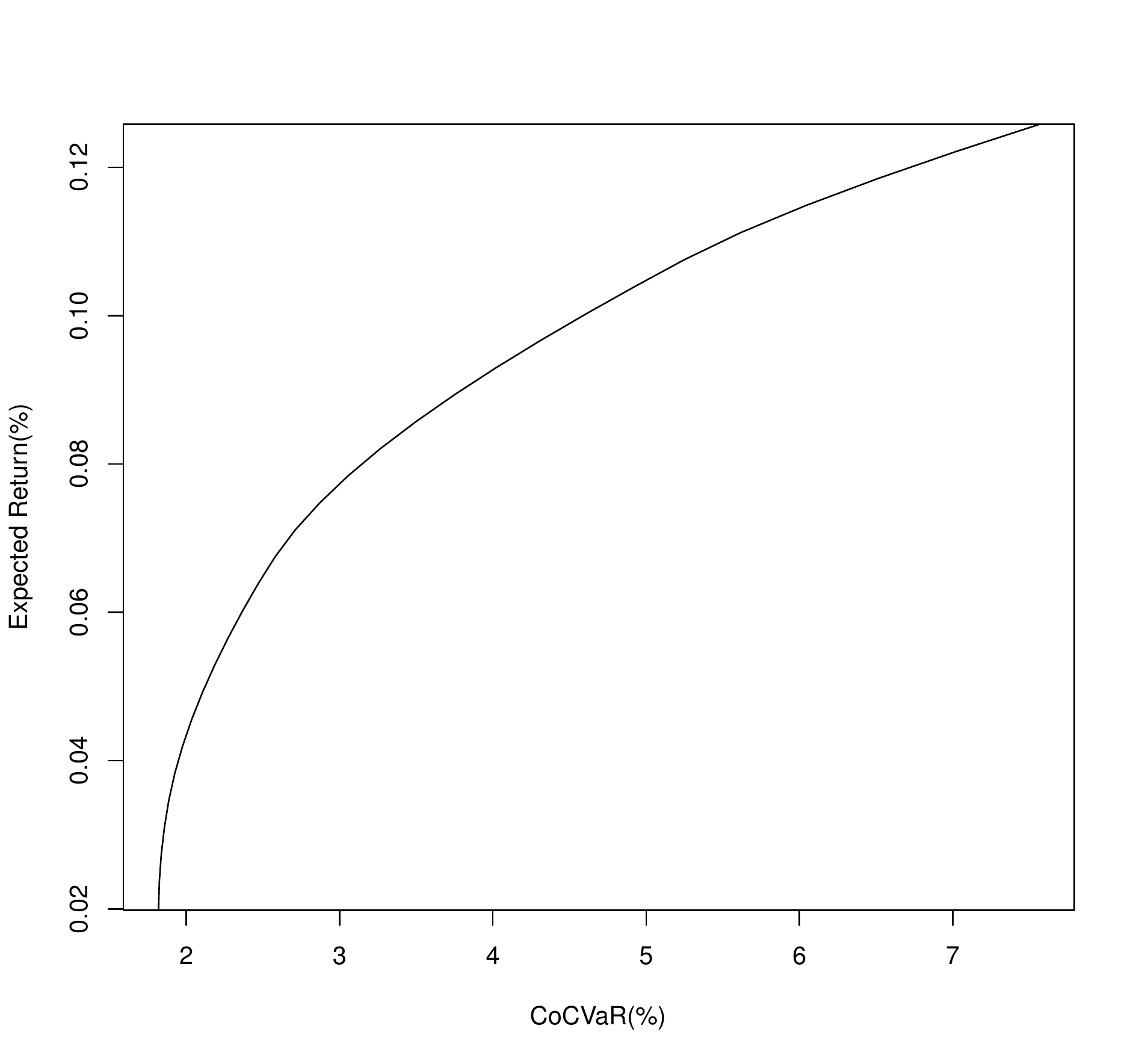}
\caption{\label{fig:Eff_CoCVaR}Efficient frontier}
\end{figure}
Since CoVaR and CoCVaR can capture the relative tail risk under the condition of distressed condition of a benchmark asset or index, we can use those two risk measures for relative portfolio optimization. In this section, we show an empirical example of CoCVaR minimizing portfolio optimization for the 30 stocks with respect to the DJIA index on the NTS market model.

We set a nonlinear programming problem for the portfolio optimization as
\begin{align*}
&\min_{w} \CoCVaR{\eta}{R_p(w)}{\zeta}{R_0}\\
\text{subject to}~~~
& w^\tr\mu \ge \mu^* \\
&\sum_{n=1}^N w_n = 1\\
& w_n\ge 0 \text{ for all } n\in \{1,2,\cdots, N\} 
\end{align*}
where the benchmark values for the portfolio expected return is
$\mu^*\in[\min(\mu),\max(\mu)]$. 
Using the parameters in Table \ref{table:ParamEst}, we perform the portfolio optimization for 51  points of $\mu^*$ in $\{\mu=\min(\mu)+k\cdot (\max(\mu)-\min(\mu))/50\,|\, k=0,1,2,\cdots, 50\}$. We finally obtain the efficient frontier in Figure \ref{fig:Eff_CoCVaR}.

\begin{table}[ht]
\centering
\begin{tabular}{c|ccc|c|rr|rr}
\hline
Symbol & $\mu_n (\%)$ & $\sigma_n (\%)$ & $\beta_n (\%)$  & KS  &     &      &                &       \\ 
\hline
DJIA & 0.0310 & 1.4575 & $-$3.7939 & 0.0100  & \footnotesize{MCT-CoVaR(\%)} & Rank & \footnotesize{MCT-CoCVaR(\%)} & Rank      \\
  \hline
AAPL & 0.1258 & 2.1977 & $-$1.2679 & 0.0587 & 1.6529 &    17 & 3.5957 &    19 \\ 
  AMGN & 0.0490 & 1.6810 & 1.8704 & 0.0416 & 3.2980 &    22 & 2.1241 &     7 \\ 
  AXP & 0.0395 & 2.5245 & 0.5258 & 0.0241 & 7.4288 &    30 & 6.6507 &    30 \\ 
  BA & $-$0.0564 & 3.4757 & 3.7589 & 0.0168 & 1.7783 &    18 & 6.3659 &    29 \\ 
  CAT & 0.0733 & 2.1609 & 0.3824 & 0.0502 & 5.5670 &    26 & 4.8958 &    22 \\ 
  CRM & 0.0284 & 2.5370 & $-$1.0588 & 0.0514 & 5.1597 &    23 & 5.3903 &    27 \\ 
  CSCO & 0.0117 & 1.9311 & $-$1.3543 & 0.0323 & 1.5658 &    15 & 2.2187 &     8 \\ 
  CVX & 0.0678 & 2.4216 & 1.5941 & 0.0176 & 5.6102 &    27 & 4.9429 &    23 \\ 
  DD & $-$0.0096 & 2.4072 & 0.3164 & 0.0593 & $-$0.1376 &     2 & 1.4762 &     4 \\ 
  DIS & $-$0.0140 & 2.1879 & 4.1225 & 0.0337 & 0.3701 &    10 & 3.4376 &    18 \\ 
  GS & 0.0791 & 2.2039 & 2.0020 & 0.0485 & 6.0133 &    29 & 5.4714 &    28 \\ 
  HD & 0.0709 & 1.9351 & $-$3.9629 & 0.0261 & 5.3208 &    25 & 4.9610 &    24 \\ 
  HON & 0.0483 & 1.8365 & $-$2.3242 & 0.0341 & 1.5242 &    14 & 2.6857 &    13 \\ 
  IBM & 0.0437 & 1.8054 & $-$1.5555 & 0.0287 & 0.5364 &    11 & 3.1157 &    16 \\ 
  INTC & $-$0.0314 & 2.4868 & $-$2.8086 & 0.0295 & 0.2124 &     8 & 2.6130 &    11 \\ 
  JNJ & 0.0260 & 1.3628 & $-$0.5663 & 0.0285 & 0.8420 &    12 & 1.5451 &     5 \\ 
  JPM & 0.0333 & 2.1622 & 2.5850 & 0.0357 & 5.6910 &    28 & 5.1918 &    26 \\ 
  KO & 0.0324 & 1.4671 & $-$3.6931 & 0.0249 & 0.3419 &     9 & 2.6197 &    12 \\ 
  MCD & 0.0476 & 1.5731 & 1.2955 & 0.0162 & 0.2091 &     7 & 2.3540 &     9 \\ 
  MMM & $-$0.0286 & 1.7824 & $-$2.1382 & 0.0420 & 0.0681 &     6 & 2.7091 &    14 \\ 
  MRK & 0.0461 & 1.5310 & $-$0.0218 & 0.0463 & $-$0.7418 &     1 & 0.2360 &     1 \\ 
  MSFT & 0.0879 & 2.0241 & $-$2.1377 & 0.0469 & 2.4863 &    20 & 3.9390 &    20 \\ 
  NKE & 0.0424 & 2.1704 & $-$0.2999 & 0.0418 & 2.4980 &    21 & 4.5851 &    21 \\ 
  PG & 0.0511 & 1.4348 & $-$3.0200 & 0.0322 & $-$0.1305 &     3 & 0.7501 &     3 \\ 
  TRV & 0.0444 & 1.9422 & $-$2.8043 & 0.0255 & 1.1151 &    13 & 3.3984 &    17 \\ 
  UNH & 0.0714 & 1.9919 & 1.6187 & 0.0288 & 0.0165 &     5 & 2.5646 &    10 \\ 
  V & 0.0472 & 1.9411 & $-$2.5133 & 0.0423 & 5.2771 &    24 & 5.1686 &    25 \\ 
  VZ & $-$0.0226 & 1.2645 & $-$2.5648 & 0.0357 & $-$0.1084 &     4 & 0.3166 &     2 \\ 
  WBA & $-$0.0539 & 2.2096 & $-$1.4927 & 0.0360 & 1.6265 &    16 & 2.7307 &    15 \\ 
  WMT & 0.0508 & 1.4940 & 1.0659 & 0.0263 & 2.4613 &    19 & 1.7390 &     6 \\ 
   \hline
\end{tabular}
\caption{\label{table:ParamEst}NTS parameter fit using 1-day-returns from 11/27/2018 to 11/15/2022. $\alpha = 1.1835$, $\theta = 0.0820$}
\end{table}

\subsection{Risk Budgeting}

The MCT-CoVaR and MCT-CoCVaR allow portfolio managers to decide to rebalance their capital allocation weights. Managers can reduce portfolio risk by decreasing the weight of high-risk contributors and increasing the weight of low-risk contributors. The high-risk contributors are assets having high MCT-CoVaR (or MCT-CoCVaR), while the low-risk contributors are assets having low MCT-CoVaR (or MCT-CoCVaR). 

Consider a capital allocation vector $w$ for a portfolio with $N$ member stocks. Let $\varDelta w=(\varDelta w_1$, $\varDelta w_2$, $\cdots$, $\varDelta w_N)^\tr\in D$ where $D$ is a zero neighborhood in $\R^N$, and let 
\[
\varDelta \CoVaR{\eta}{R_P(w)}{\zeta}{R_0}=\CoVaR{\eta}{R_P(w+\varDelta w)}{\zeta}{R_0}-\CoVaR{\eta}{R_P(w)}{\zeta}{R_0},
\]
and
\[
\varDelta \CoCVaR{\eta}{R_P(w)}{\zeta}{R_0}=\CoCVaR{\eta}{R_P(w+\varDelta w)}{\zeta}{R_0}-\CoCVaR{\eta}{R_P(w)}{\zeta}{R_0}.
\]
The optimal portfolios with respect to CoVaR and CoCVaR are obtained by solving the following problem:
\begin{align}
&\min_{\varDelta w} \varDelta \CoVaR{\eta}{R_P(w)}{\zeta}{R_0}  \\
\nonumber
&\text{subject to }  E[R_p(w+\varDelta w)]-E[R_P(w)]\ge 0
\text{ and } \sum_{j=1}^N \varDelta w_j = 0,
\end{align}
and 
\begin{align}
&\min_{\varDelta w} \varDelta \CoCVaR{\eta}{R_P(w)}{\zeta}{R_0}  \\
\nonumber
&\text{subject to }  E[R_p(w+\varDelta w)]-E[R_P(w)]\ge 0
\text{ and } \sum_{j=1}^N \varDelta w_j = 0.
\end{align}
Since we have
\begin{align*}
&\varDelta \CoVaR{\eta}{R_P(w)}{\zeta}{R_0}\approx\sum_{j=1}^N \left(\frac{\partial}{\partial w_j}\CoVaR{\eta}{R_P(w)}{\zeta}{R_0} \right) \varDelta w_j,
\\
&\varDelta \CoCVaR{\eta}{R_P(w)}{\zeta}{R_0}\approx\sum_{j=1}^N \left(\frac{\partial }{\partial w_j}\CoCVaR{\eta}{R_P(w)}{\zeta}{R_0} \right) \varDelta w_j \\
&\text{ and }  E[R_p(w+\varDelta w)]-E[R_P(w)]=\sum_{j=1}^N \mu_j\varDelta w_j,
\end{align*}
we can find the optimal portfolio on the local domain $D$ with respect to CoVaR and CoCVaR, respectively, as follows:
\begin{align}\label{eq:OptPortVaR}
&\varDelta w^*=\argmin_{\varDelta w} \sum_{j=1}^N \left(\frac{\partial}{\partial w_j}\CoVaR{\eta}{R_P(w)}{\zeta}{R_0} \right) \varDelta w_j \\
\nonumber
&\text{subject to }  \sum_{j=1}^N \mu_j\varDelta w_j\ge 0
\text{ and } \sum_{j=1}^N \varDelta w_j = 0.
\end{align}
and 
\begin{align}
\label{eq:OptPortAVaR}
&\varDelta w^*=\argmin_{\varDelta w} \sum_{j=1}^N \left(\frac{\partial}{\partial w_j}\CoCVaR{\eta}{R_P(w)}{\zeta}{R_0} \right) \varDelta w_j \\
\nonumber
&\text{subject to }  \sum_{j=1}^N \mu_j\varDelta w_j\ge 0
\text{ and } \sum_{j=1}^N \varDelta w_j = 0.
\end{align}

We perform the risk budgeting for CoVaR and CoCVaR using the 30 stocks in Table \ref{table:DJIA Members}, i.e. $N=30$, with the estimated parameters in Table \ref{table:ParamEst}, iteratively, as the following algorithm:
\begin{enumerate}
\item[] \textbf{Step 1.} Generate a set of independent pairs of standard normal random numbers $S_0$, and a set of tempered stable subordinators, $T$, which are independent of $S_0$, as we discussed in Section \ref{sec:mcs CoVaR CoCVaR}. Here, the sample size is $M=100,000$.
\item[] \textbf{Step 2.} Select an initial capital allocation weight vector $w$.
\item[] \textbf{Step 3.} 
Find $\mu_p(w)$, $\sigma_p(w)$, $\beta_p(w)$, and $\rho_p(w)$ by Proposition \ref{prop:Index and Portfolio}.
\item[] \textbf{Step 4.} 
Regenerate a set of bivariate standard normal random vectors $S_p=\{(\epsilon_{0,m},\epsilon_{p,m})|\epsilon_{p,m}=\rho_p(w)\epsilon_{0,m}+\sqrt{1-(\rho_p(w))^2}\epsilon_{1,m}, (\epsilon_{0,m}, \epsilon_{1,m})\in S_0\}$ with correlation $\rho_p(w)$.
\item[] \textbf{Step 5.} Calculate MCT-CoVaR or MCT-CoCVaR for $w$ using the MCS method we discussed in Section \ref{sec:mcs CoVaR CoCVaR}.
\item[] \textbf{Step 6.} Perform risk budgeting and find $\varDelta w^*$ using \eqref{eq:OptPortVaR}, or using \eqref{eq:OptPortAVaR}, for for the CoVaR risk budgeting, or the CoCVaR risk budgeting, respectively. 
\item[] \textbf{Step 7.} Change $w$ to $w+\varDelta w^*$ and go to Step 3. Repeat [Step 2 - Step 7] $L$ times.
\end{enumerate}  
Let the initial capital allocation weight vector be equally weighted. 
We perform the iterative risk budgeting $L=200$ times for the local domain be
\[
D=\{(x_1,x_2,\cdots,x_{30})\,|\, x_j\in[-4\cdot 10^{-4},4\cdot 10^{-4}],\, j=1,2,\cdots, 30 \}.
\]
The results are exhibited in Figure \ref{fig:RiskBudgeting_CoCVaR}. For each iteration, we show the values of CoVaR, and CoCVaR with the MCS method in the left and right plate, respectively.
The figure shows that the portfolio CoVaR and CoCVaR with respect to the DJIA index decreases as increasing the number of iterations. 
That is using risk budgeting of CoVaR and CoCVaR on the NTS market model, 
we obtain a portfolio having the same expected return but less relative tail risk.

\begin{figure}[ht]
\centering
\includegraphics[width = 8cm]{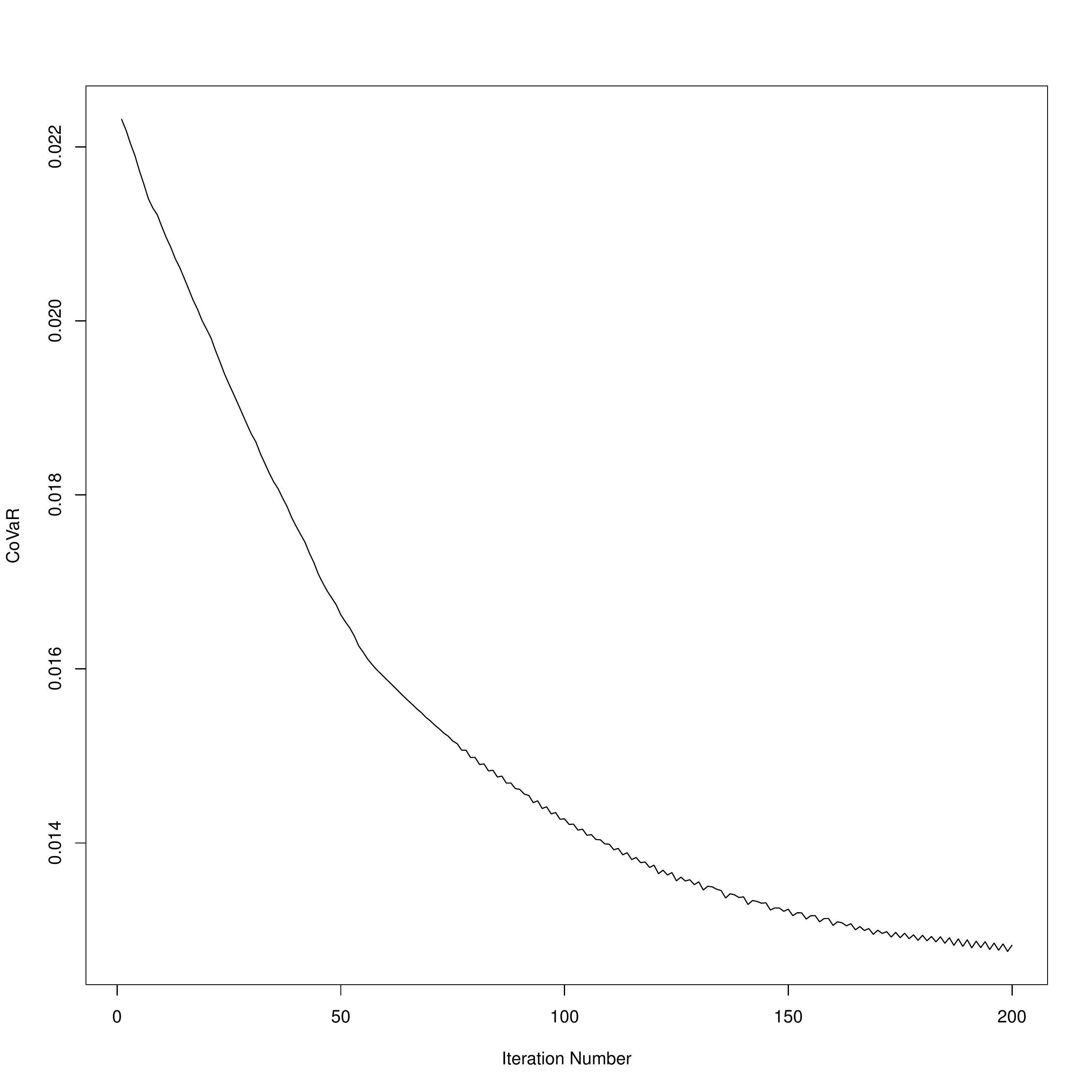}
\includegraphics[width = 8cm]{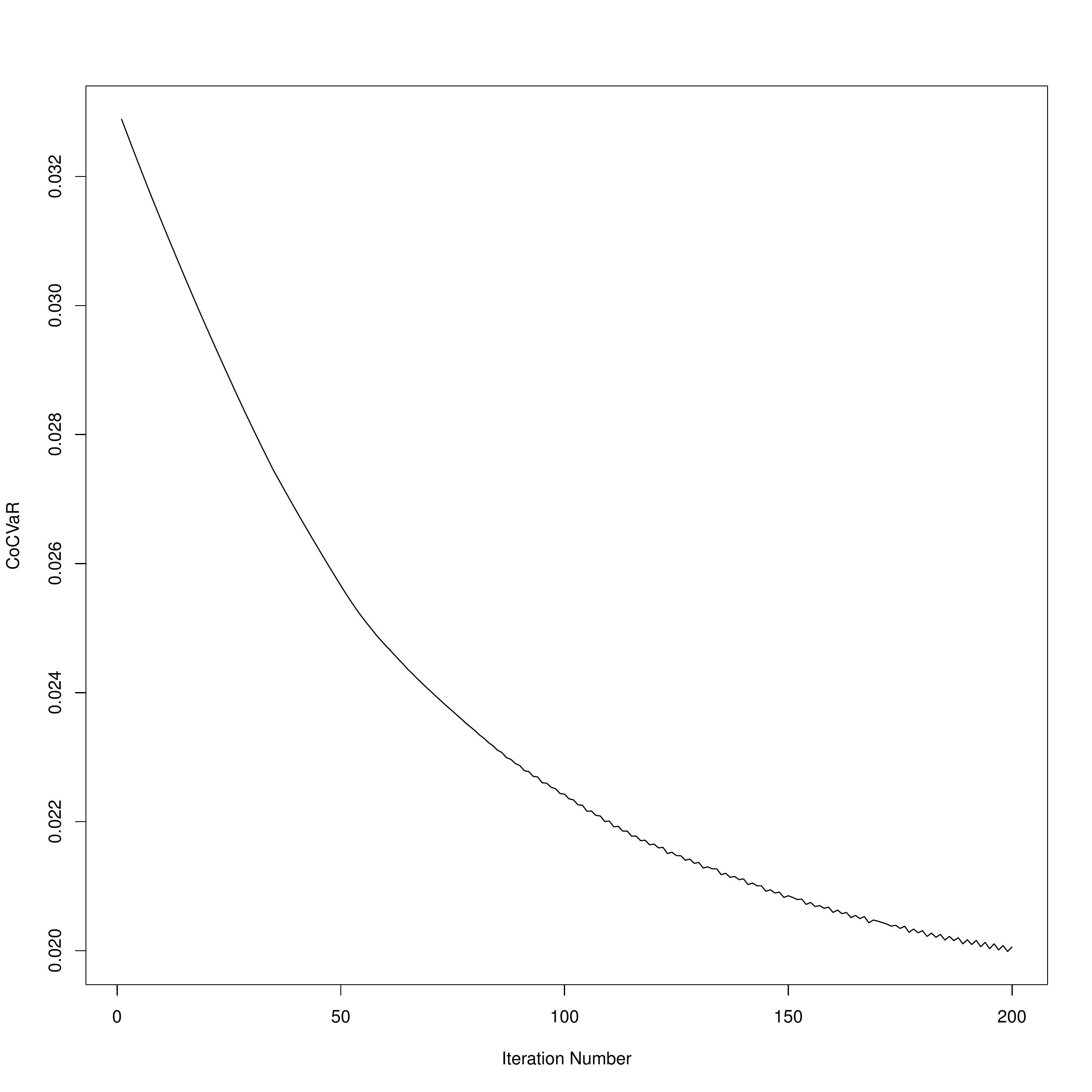}
\caption{\label{fig:RiskBudgeting_CoCVaR}Risk Budgeting iteration}
\end{figure}

\section{conclusion}
This paper presents portfolio CoVaR and portfolio CoCVaR on the NTS market model. We develop an MCS method to calculate portfolio CoVaR and CoCVaR, and apply it to portfolio optimization. As an empirical illustration, we consider a portfolio consisting of 30 stocks and measure the CoCVaR of the portfolio with respect to the DJIA index, and then find the efficient frontiers of the portfolio, maximizing the portfolio's expected return and minimizing the relative tail risk captured by the CoCVaR.
In addition, we find an analytic formula for the marginal contributions to CoVaR and CoCVaR, which we calculate using the MCS method to overcome numerical difficulties. 
We also perform portfolio risk budgeting methods using the marginal contributions to CoVaR and CoCVaR on the NTS market model. We empirically show that the portfolio CoVaR and CoCVaR are decreased by using portfolio budgeting iteratively.

\section{Appendix}
\begin{proof}[Proof of Proposition \ref{prop:Index and Portfolio}]
By Proposition \ref{pro:mu+sigma stdNTS}, the stock portfolio return $R_p(w) = \sum_{n=1}^N w_n R_n$ is equal to
\[
R_p(w) = \mu_p(w) + \sigma_p(w) \Xi_p(w), \text{ with } \Xi_p(w)\sim \textup{stdNTS}_{1}(\alpha, \theta, \beta_p(w), 1),
\]
where
\[
\mu_p(w) = \sum_{n=1}^N w_n \mu_n,~~~ 
\beta_p(w) = \frac{1}{\sigma_p(w)}\sum_{n=1}^N  w_n\sigma_n\beta_n.
\]
and
\begin{equation}\label{eq:sigma_p(w)} 
\sigma_p(w) 
= \sqrt{\sum_{n=1}^N\sum_{m=1}^N w_n w_m \sigma_n \sigma_m \cov\left(\Xi_n, \Xi_m\right)}.
\end{equation}
Since 
$
\Xi_p(w) = \frac{1}{\sigma_p(w)}\sum_{n=1}^N w_n \sigma_n \Xi_n,
$
we have
\begin{align*}
\Xi_p(w) &= \frac{1}{\sigma_p(w)}\sum_{n=1}^N \left(w_n \sigma_n \beta_n\left(\Tau-1\right)+\gamma_n\sqrt{\Tau}\varepsilon_n \right)
\\&=\beta_p(w)\left(\Tau-1\right)+\frac{\sum_{n=1}^N w_n \sigma_n \gamma_n \varepsilon_n}{\sigma_p(w)} \sqrt{\Tau}
\end{align*}
with $\gamma_n = \sqrt{1-\beta_n^2\left(\frac{2-\alpha}{2\theta}\right)}$.
Note that we have
\[
E\left[\frac{\sum_{n=1}^N w_n \sigma_n \gamma_n \varepsilon_n}{\sigma_p(w)}\right]=0
\]
and
\begin{align*}
\var\left(\frac{\sum_{n=1}^N w_n \sigma_n \gamma_n \varepsilon_n}{\sigma_p(w)}\right) &= \frac{1}{\sigma_p(w)}\sum_{n=1}^N\sum_{m=1}^Nw_nw_m\sigma_n\sigma_m\gamma_n\gamma_n\rho_{n,m}.
\end{align*}
By \eqref{eq:cov Xi_n and Xi_m} and  \eqref{eq:sigma_p(w)}, we get
\begin{align*}
\var\left(\frac{\sum_{n=1}^N w_n \sigma_n \gamma_n \varepsilon_n}{\sigma_p(w)}\right) &=\frac{1}{(\sigma_p(w))^2}\sum_{n=1}^N\sum_{m=1}^N w_n w_m\sigma_n\sigma_m\left(\cov(\Xi_n,\Xi_m)-\beta_n\beta_m\left(\frac{2-\alpha}{2\theta}\right)\right)
\\
&=1-\frac{1}{(\sigma_p(w))^2}\left(\sum_{n=1}^n w_n\sigma_n\beta_n\right)^2\left(\frac{2-\alpha}{2\theta}\right)
\\&=1-\left(\beta_p(w)\right)^2\left(\frac{2-\alpha}{2\theta}\right)
\end{align*}
For the Gaussian property, we get
\[
\frac{\sum_{n=1}^N w_n \sigma_n \gamma_n \varepsilon_n}{\sigma_p(w)} = \sqrt{\var
\left(
\frac{\sum_{n=1}^N w_n \sigma_n \gamma_n \varepsilon_n}{\sigma_p(w)}
\right)
}
\varepsilon_p = \sqrt{1-\left(\beta_p(w)\right)^2\left(\frac{2-\alpha}{2\theta}\right)}\varepsilon_p,
\]
where $\varepsilon_p\sim \Phi(0,1)$. Hence, we have
\[
\Xi_p(w) = \beta_p(w)\left(\Tau-1\right)+\gamma_p(w)\sqrt{\Tau}\varepsilon_p
\]
where $\gamma_p(w)=\sqrt{1-\left(\beta_p(w)\right)^2\left(\frac{2-\alpha}{2\theta}\right)}$.
Therefore, we have
\begin{equation}\label{eq:cov(xi0,xip)1}
\cov\left(\Xi_0, \Xi_p(w)\right) = \gamma_0\gamma_p(w)\rho_{0,p}+\beta_0\beta_p(w)\left(\frac{2-\alpha}{2\theta}\right).
\end{equation}
On the other hand, we have
\begin{align*}
\cov\left(\Xi_0, \Xi_p(w)\right) &= \cov\left(\beta_0(\Tau-1)+\gamma_0\sqrt{\Tau}\varepsilon_0, \frac{1}{\sigma_p(w)}\sum_{n=1}^Nw_n \sigma_n \left(\beta_n(\Tau-1)+\gamma_n\sqrt{\Tau}\varepsilon_n\right)\right)\\
&=\frac{1}{\sigma_p(w)}\sum_{n=1}^Nw_n \sigma_n \cov\left(\beta_0(\Tau-1)+\gamma_0\sqrt{\Tau}\varepsilon_0, \beta_n(\Tau-1)+\gamma_n\sqrt{\Tau}\varepsilon_n \right)
\\
&=\frac{1}{\sigma_p(w)}\sum_{n=1}^Nw_n \sigma_n E\left[ \left(\beta_0(\Tau-1)+\gamma_0\sqrt{\Tau}\varepsilon_0\right)\left(\beta_n(\Tau-1)+\gamma_n\sqrt{\Tau}\varepsilon_n\right) \right].
\end{align*}
Since we have
\begin{align*}
&E\left[\left(\beta_n(\Tau-1)+\gamma_n\sqrt{\Tau}\varepsilon_n\right) \left(\beta_0(\Tau-1)+\gamma_0\sqrt{\Tau}\varepsilon_0\right) \right]
\\
&=
E\left[\beta_n\beta_0(\Tau-1)^2\right]+E\left[\gamma_n\gamma_0\Tau\varepsilon_n\varepsilon_0\right]\\
&=\beta_n\beta_0 \var(\Tau)+\gamma_n\gamma_0\rho_{0,n}=\beta_n\beta_0\left(\frac{2-\alpha}{2\theta}\right)+\gamma_n\gamma_0\rho_{0,n},
\end{align*}
we obtain
\begin{align}
\nonumber
\cov\left(\Xi_0, \Xi_p(w)\right) &= \frac{1}{\sigma_p(w)}\sum_{n=1}^Nw_n \sigma_n\left(\beta_n\beta_0\left(\frac{2-\alpha}{2\theta}\right)+\gamma_n\gamma_0\rho_{0,n}\right)
\\
\label{eq:civ(xi0,xip)2}
&=
\beta_0 \beta_p(w) \left(\frac{2-\alpha}{2\theta}\right)+\frac{\gamma_0}{\sigma_p(w)}\sum_{n=1}^Nw_n \sigma_n\gamma_n\rho_{0,n}.
\end{align}
By \eqref{eq:cov(xi0,xip)1} and \eqref{eq:civ(xi0,xip)2}, we have
\begin{align*}
\gamma_0\gamma_p(w)\rho_{0,p}+\beta_0\beta_p(w)\left(\frac{2-\alpha}{2\theta}\right)
=\beta_0 \beta_p(w)\left(\frac{2-\alpha}{2\theta}\right)+\frac{\gamma_0}{\sigma_p(w)}\sum_{n=1}^Nw_n \sigma_n\gamma_n\rho_{0,n}
\end{align*}
or
\begin{align}\label{eq:rho0p dummy}
\rho_{0,p} = \frac{1}{\sigma_p(w)\gamma_p(w)}\sum_{n=1}^Nw_n \sigma_n\gamma_n\rho_{0,n}.
\end{align}
Using the definition $\gamma_p(w)=\sqrt{1-\left(\beta_p(w)\right)^2 \left(\frac{2-\alpha}{2\theta}\right) }$, we obtain
\begin{align}
\nonumber
\sigma_p(w)\gamma_p(w)
&= \sqrt{\left(\sigma_p(w)\right)^2 -\left(\sigma_p(w)\right)^2\left(\beta_p(w)\right)^2\left(\frac{2-\alpha}{2\theta}\right)}
\\
\nonumber
&=
 \sqrt{\sum_{n=1}^N\sum_{m=1}^Nw_n w_m \sigma_n \sigma_m \cov\left(\Xi_n,\Xi_m\right) -\left(\sum_{n=1}^Nw_n\sigma_n\beta_n\right)^2\left(\frac{2-\alpha}{2\theta}\right)}
\\
\label{eq:sigma times gamma}
&=\sqrt{\sum_{n=1}^N\sum_{m=1}^Nw_n w_m \sigma_n \sigma_m  \gamma_n \gamma_m \rho_{n,m} }.
\end{align}
By substituting \eqref{eq:sigma times gamma} into \eqref{eq:rho0p dummy}, we obtain \eqref{eq:rho 0,p}.
\end{proof}

\begin{lemma}
Based on the setting in Section \ref{Sec:MCTCoCVaR}, we have\\
(i)
\begin{align}\label{eq:d sigma d wj}
\frac{\partial}{\partial w_j}\sigma_p(w) 
=\frac{1}{\sigma_p(w)}\sum_{n=1}^N w_n\sigma_n\sigma_j\cov(\Xi_n,\Xi_j)
\end{align}
(ii)
\begin{align}\label{eq:d beta d wj} 
\frac{\partial}{\partial w_j}\beta_p(w) 
=
\frac{\sigma_j}{\sigma_p(w)}\left(
\beta_j 
-
\frac{\beta_p(w)}{\sigma_p(w)}\sum_{n=1}^N w_n\sigma_n\cov(\Xi_n,\Xi_j)
\right)
\end{align}
and
\begin{align}\label{eq:d gamma d wj}
\frac{\partial}{\partial w_j}\gamma_p(w) 
=-\frac{\beta_p(w)\left(\frac{2\theta}{2-\alpha}\right)}{\gamma_p(w)}\frac{\partial}{\partial w_j}\beta_p(w) 
\end{align}
(iii)
\begin{align}\label{eq:d u d wj}
&\frac{\partial}{\partial w_j}u(x,w,t)
\\
\nonumber
&
=\frac{\sigma_j}{\sigma_p(w)}%
\left(\frac{1-t}{\gamma_p(w)\sqrt{t}}
+\frac{u(x,w,t)\beta_p(w)\left(\frac{2\theta}{2-\alpha}\right)}{(\gamma_p(w))^2}\right)
\left(
\beta_j-\frac{\beta_p(w)}{\sigma_p(w)}\sum_{n=1}^N w_n\sigma_n\cov(\Xi_n,\Xi_j)
\right)
\end{align}
(iv) Let
\begin{align*}
C(w,t) = \frac{ -\CoVaR{\eta}{\Xi_p(w)}{\zeta}{\Xi_0}-\beta_p(w)(t-1)}{\gamma_p(w)\sqrt{t}}
\end{align*}
Then
\begin{align}\label{eq:dCdwj}
 \frac{\partial}{\partial w_j}C(w,t)
&=-\frac{1}{t(\gamma_p(w))^2}\frac{\partial}{\partial w_j}\CoVaR{\eta}{\Xi_p(w)}{\zeta}{\Xi_0}
\\
&
\nonumber
+\frac{\sigma_j}{\sigma_p(w)}%
\left(\frac{1-t}{\gamma_p(w)\sqrt{t}}
+\frac{C(w,t)\beta_p(w)\left(\frac{2\theta}{2-\alpha}\right)}{(\gamma_p(w))^2}\right)
\left(
\beta_j-\frac{\beta_p(w)}{\sigma_p(w)}\sum_{n=1}^N w_n\sigma_n\cov(\Xi_n,\Xi_j)
\right)
%
\end{align}
(v)
\begin{align}\label{eq:d rho d wj}
\frac{\partial}{\partial w_j}\rho_p(w)=
\gamma_j \sigma_j\left(\frac{ \rho_{0,j} }{\sqrt{w^\tr \Sigma_{\gamma,\sigma}^* w}}
-\frac{\rho_p(w)}{w^\tr \Sigma_{\gamma,\sigma}^* w}\sum_{n=1}^N w_n \gamma_n \sigma_n \rho_{n,j}\right)
\end{align}
\end{lemma}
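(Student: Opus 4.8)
The plan is to treat all five identities as applications of the chain and quotient rules to the closed-form expressions for $\sigma_p(w)$, $\beta_p(w)$, $\gamma_p(w)$, $\rho_p(w)$, $u(x,w,t)$, and $C(w,t)$ recorded in Proposition~\ref{prop:Index and Portfolio}, its Remark, and the preamble to Section~\ref{Sec:MCTCoCVaR}, carrying the earlier parts forward into the later ones. First I would establish (i) by differentiating the square rather than the root: from \eqref{eq:sigma_p(w)} we have $(\sigma_p(w))^2=\sum_{n=1}^N\sum_{m=1}^N w_n w_m \sigma_n\sigma_m\cov(\Xi_n,\Xi_m)$, and using the symmetry $\cov(\Xi_n,\Xi_m)=\cov(\Xi_m,\Xi_n)$ the product rule gives $2\sigma_p(w)\,\frac{\partial}{\partial w_j}\sigma_p(w)=2\sigma_j\sum_{n=1}^N w_n\sigma_n\cov(\Xi_n,\Xi_j)$, which rearranges to \eqref{eq:d sigma d wj}.

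For (ii) I would write $\beta_p(w)=A(w)/\sigma_p(w)$ with $A(w)=\sum_{n=1}^N w_n\sigma_n\beta_n$, so that $\frac{\partial}{\partial w_j}A=\sigma_j\beta_j$, and apply the quotient rule; substituting \eqref{eq:d sigma d wj} for $\frac{\partial}{\partial w_j}\sigma_p(w)$ and factoring out $\sigma_j/\sigma_p(w)$ produces \eqref{eq:d beta d wj}. Since $\gamma_p(w)$ is an explicit function of $\beta_p(w)$ alone, the chain rule applied to its definition immediately yields \eqref{eq:d gamma d wj}, with $\frac{\partial}{\partial w_j}\beta_p(w)$ left in the form just obtained.

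Parts (iii) and (iv) form the core of the argument. For (iii) I would differentiate $u(x,w,t)=(x-\beta_p(w)(t-1))/(\gamma_p(w)\sqrt t)$ holding $x$ and $t$ fixed, noting that both $\beta_p(w)$ and $\gamma_p(w)$ depend on $w$. The linchpin is the identity $x-\beta_p(w)(t-1)=u(x,w,t)\,\gamma_p(w)\sqrt t$, which lets me rewrite the term coming from $\frac{\partial}{\partial w_j}\gamma_p(w)$ entirely in terms of $u(x,w,t)$; after inserting \eqref{eq:d gamma d wj} and then \eqref{eq:d beta d wj}, the common factor $\frac{\partial}{\partial w_j}\beta_p(w)$ factors out and \eqref{eq:d u d wj} follows. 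Part (iv) is the same computation with $x$ replaced by $-\CoVaR{\eta}{\Xi_p(w)}{\zeta}{\Xi_0}$, so the answer is the expression of (iii) evaluated at $u=C(w,t)$, \emph{plus} one extra term. The hard part, and the step where care is most needed, is precisely this extra term: because $-\CoVaR{\eta}{\Xi_p(w)}{\zeta}{\Xi_0}$ is itself a function of $w$ sitting in the numerator of $C(w,t)$, the chain rule contributes $\frac{\partial}{\partial x}u\cdot\frac{\partial}{\partial w_j}\bigl(-\CoVaR{\eta}{\Xi_p(w)}{\zeta}{\Xi_0}\bigr)$ in addition to the direct dependence through $\beta_p(w)$ and $\gamma_p(w)$. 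Assembling both contributions gives \eqref{eq:dCdwj}, with $\frac{\partial}{\partial w_j}\CoVaR{\eta}{\Xi_p(w)}{\zeta}{\Xi_0}$ left symbolic, to be supplied later by Proposition~\ref{prop:mctcovar}.

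Finally, for (v) I would use the matrix form from the Remark, $\rho_p(w)=w^\tr V_{\gamma,\sigma,\rho}^*/\sqrt{w^\tr\Sigma_{\gamma,\sigma}^* w}$. Then $\frac{\partial}{\partial w_j}(w^\tr V_{\gamma,\sigma,\rho}^*)=(V_{\gamma,\sigma,\rho}^*)_j=\gamma_j\sigma_j\rho_{0,j}$ from $V_{\gamma,\sigma,\rho}^*=\diag(\gamma^*)\diag(\sigma^*)\rho^*$, while $\frac{\partial}{\partial w_j}(w^\tr\Sigma_{\gamma,\sigma}^* w)=2(\Sigma_{\gamma,\sigma}^* w)_j=2\gamma_j\sigma_j\sum_{n=1}^N w_n\gamma_n\sigma_n\rho_{n,j}$ by symmetry of $\Sigma_{\gamma,\sigma}^*$. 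Applying the quotient rule and using $w^\tr V_{\gamma,\sigma,\rho}^*/\sqrt{w^\tr\Sigma_{\gamma,\sigma}^* w}=\rho_p(w)$ to simplify the second term then yields \eqref{eq:d rho d wj}. The only real bookkeeping risks across the whole lemma are the repeated substitutions in (iii)--(iv) and keeping the $\gamma_p(w)$-derivative sign correct; everything else is routine differentiation.
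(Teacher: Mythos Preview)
Your proposal is correct and follows essentially the same route as the paper: each identity is obtained by direct application of the quotient and chain rules to the explicit formulas for $\sigma_p(w)$, $\beta_p(w)$, $\gamma_p(w)$, $u(x,w,t)$, $C(w,t)$, and $\rho_p(w)$, with the earlier parts fed into the later ones exactly as you describe. In particular, your treatment of (iv) as ``(iii) evaluated at $x=-\CoVaR{\eta}{\Xi_p(w)}{\zeta}{\Xi_0}$ plus the extra chain-rule contribution $\frac{\partial u}{\partial x}\cdot\frac{\partial}{\partial w_j}(-\CoVaR{\eta}{\Xi_p(w)}{\zeta}{\Xi_0})$'' matches the paper's derivation verbatim.
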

\begin{proof}
(i)
\begin{align*}
\frac{\partial}{\partial w_j}\sigma_p(w) 
&
=\frac{1}{2\sigma_p(w)}\sum_{n=1}^N 2w_n\sigma_n\sigma_j\cov(\Xi_n,\Xi_j)
\\
&
=\frac{1}{\sigma_p(w)}\sum_{n=1}^N w_n\sigma_n\sigma_j\left(\gamma_n\gamma_j\rho_{n,j}+\beta_n\beta_j\left(\frac{2-\alpha}{2\theta}\right)\right)
\end{align*}

(ii)
\begin{align*}
\frac{\partial}{\partial w_j}\beta_p(w) 
&= \frac{1}{\left(\sigma_p(w)\right)^2}\left(
\sigma_j \beta_j \sigma_p(w)
-\left(\sum_{n=1}^N w_n \sigma_n \beta_n \right)
\frac{\partial}{\partial w_j}\sigma_p(w) 
\right) \\
&= \frac{1}{\left(\sigma_p(w)\right)^2}\left(
\sigma_j \beta_j \sigma_p(w)
-\sigma_p(w)\beta_p(w)
\frac{\partial}{\partial w_j}\sigma_p(w) 
\right) \\
&=
\frac{\sigma_j}{\sigma_p(w)}\left(
\beta_j 
-
\frac{\beta_p(w)}{\sigma_p(w)}\sum_{n=1}^N w_n\sigma_n\cov(\Xi_n,\Xi_j)
\right) \\
\end{align*}
and
\begin{align*}
\frac{\partial}{\partial w_j}\gamma_p(w) 
&=\frac{-2\beta_p(w)\left(\frac{2\theta}{2-\alpha}\right)}{2\sqrt{1-(\beta_p(w))^2\left(\frac{2\theta}{2-\alpha}\right)}}\frac{\partial}{\partial w_j}\beta_p(w) 
=-\frac{\beta_p(w)\left(\frac{2\theta}{2-\alpha}\right)}{\gamma_p(w)} \frac{\partial}{\partial w_j}\beta_p(w) 
\end{align*}

(iii) We have
\begin{align}
\nonumber
&\frac{\partial}{\partial w_j}u(x,w,t)\\
\nonumber
&=\frac{1}{t(\gamma_p(w))^2}
\Bigg(
\left(-(t-1)\frac{\partial}{\partial w_j}\beta_p(w)\right)\gamma_p(w)\sqrt{t}
-(x-(t-1)\beta_p(w))\sqrt{t}\frac{\partial}{\partial w_j}\gamma_p(w)
\Bigg)\\
&
=\left(\frac{1-t}{\gamma_p(w)\sqrt{t}}
+\frac{u(x,w,t)\beta_p(w)\left(\frac{2\theta}{2-\alpha}\right)}{(\gamma_p(w))^2}\right)\frac{\partial}{\partial w_j}\beta_p(w).
\label{eq:dudwj sub}
\end{align}
We obtain \eqref{eq:d u d wj}, by substituting \eqref{eq:d beta d wj} into the last equation.

(iv) We have
\begin{align*}
&\frac{\partial}{\partial w_j}C(w,t)\\
&=\frac{1}{t(\gamma_p(w))^2}
\Bigg(
\left(-\frac{\partial}{\partial w_j}\CoVaR{\eta}{\Xi_p(w)}{\zeta}{\Xi_0}-(t-1)\frac{\partial}{\partial w_j}\beta_p(w)\right)\gamma_p(w)\sqrt{t}
\\&
-\left(-\CoVaR{\eta}{\Xi_p(w)}{\zeta}{\Xi_0}-(t-1)\beta_p(w)\right)\sqrt{t}\frac{\partial}{\partial w_j}\gamma_p(w)
\Bigg)\\
&
=-\frac{1}{t(\gamma_p(w))^2}\frac{\partial}{\partial w_j}\CoVaR{\eta}{\Xi_p(w)}{\zeta}{\Xi_0} + \left(\frac{1-t}{\gamma_p(w)\sqrt{t}}
+\frac{C(w,t)\beta_p(w)\left(\frac{2\theta}{2-\alpha}\right)}{(\gamma_p(w))^2}\right)\frac{\partial}{\partial w_j}\beta_p(w).
\end{align*}
By substituting $C(w,t)=u(x,w,t)|_{x=-\CoVaR{\eta}{\Xi_p(w)}{\zeta}{\Xi_0}}$ into \eqref{eq:dudwj sub}, we have
\begin{align*}
\frac{\partial}{\partial w_j}C(w,t)
&
=-\frac{1}{t(\gamma_p(w))^2}\frac{\partial}{\partial w_j}\CoVaR{\eta}{\Xi_p(w)}{\zeta}{\Xi_0} 
+\frac{\partial}{\partial w_j}u(x,w,t)\Big|_{x=-\CoVaR{\eta}{\Xi_p(w)}{\zeta}{\Xi_0}}.
\end{align*}

(v)
\begin{align*}
&\frac{\partial}{\partial w_j}\rho_p(w)
\\
&=\frac{1}{w^\tr \Sigma_{\gamma,\sigma}^* w}\left(\gamma_j \sigma_j \rho_{0,j} \sqrt{w^\tr \Sigma_{\gamma,\sigma}^* w}-\frac{w^\tr V_{\gamma,\sigma, \rho}^*}{2\sqrt{w^\tr \Sigma_{\gamma,\sigma}^* w}}\gamma_j\sigma_j\sum_{n=1}^N 2w_n \gamma_n \sigma_n \rho_{n,j}\right)\\
&=
\gamma_j \sigma_j\left(\frac{ \rho_{0,j} }{\sqrt{w^\tr \Sigma_{\gamma,\sigma}^* w}}
-\frac{\rho_p(w)}{w^\tr \Sigma_{\gamma,\sigma}^* w}\sum_{n=1}^N w_n \gamma_n \sigma_n \rho_{n,j} \right)
\end{align*}
\end{proof}

\begin{lemma}
Let $f^{\Phi_2}_\rho$ be the pdf of the bivariate standard normal distribution with covariance $\rho$, then we have
\begin{align}\label{eq:2dimNormal to 1dimNormal}
\int_0^q f^{\Phi_2}_\rho(x, K)dx = \frac{1}{\sqrt{2\pi}}\exp\left(-\frac{K^2}{2}\right) F_{\Phi}\left(\frac{q-\rho K}{\sqrt{1-\rho^2}}\right).
\end{align}
Moreover, if we put $\rho = \rho_p(w)$ and let $w_j$ be the $j$-th element of $w\in I^N$ then we have
\begin{align}\label{eq:d f 2-dim normal dy}
\nonumber
&\frac{\partial }{\partial w_j}f^{\Phi_2}_{\rho_p(w)}(x_1, x_2) 
\\
&=\left(\frac{\rho_p(w)}{1-(\rho_p(w))^2}-\frac{\rho_p(w)x_1^2-(1+(\rho_p(w))^2)x_1x_2+\rho_p(w)x_2^2}{(1-(\rho_p(w))^2)^2}\right)f^{\Phi_2}_{\rho_p(w)} (x_1, x_2)\frac{\partial}{\partial w_j}\rho_p(w)
\end{align}
\end{lemma}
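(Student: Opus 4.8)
The plan is to handle the two displayed formulas separately, as each reduces to an elementary computation with the bivariate standard normal density
\[
f^{\Phi_2}_\rho(x_1,x_2)=\frac{1}{2\pi\sqrt{1-\rho^2}}\exp\left(-\frac{x_1^2-2\rho x_1x_2+x_2^2}{2(1-\rho^2)}\right).
\]

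For the integral identity \eqref{eq:2dimNormal to 1dimNormal} I would complete the square in the first variable. Using $x^2-2\rho xK+K^2=(x-\rho K)^2+(1-\rho^2)K^2$, the density factorizes as
\[
f^{\Phi_2}_\rho(x,K)=\frac{1}{\sqrt{2\pi}}e^{-K^2/2}\cdot\frac{1}{\sqrt{2\pi(1-\rho^2)}}\exp\left(-\frac{(x-\rho K)^2}{2(1-\rho^2)}\right),
\]
so the factor $e^{-K^2/2}/\sqrt{2\pi}$ is constant in $x$ and pulls out of the integral, leaving a one-dimensional Gaussian density in $x$ with mean $\rho K$ and variance $1-\rho^2$. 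Substituting $u=(x-\rho K)/\sqrt{1-\rho^2}$ turns the remaining integral into $\frac{1}{\sqrt{2\pi}}\int e^{-u^2/2}\,du$ with upper limit $(q-\rho K)/\sqrt{1-\rho^2}$ and lower limit $-\infty$ (as required for the cumulative $F_{\Phi}$ on the right-hand side, and as the identity is in fact applied inside \eqref{eq:dGdx}). Recognizing this integral as $F_{\Phi}$ evaluated at the upper limit delivers \eqref{eq:2dimNormal to 1dimNormal}.

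For the derivative \eqref{eq:d f 2-dim normal dy}, the only dependence of $f^{\Phi_2}_{\rho_p(w)}$ on $w_j$ is through the scalar $\rho_p(w)$, so I would use the chain rule
\[
\frac{\partial}{\partial w_j}f^{\Phi_2}_{\rho_p(w)}(x_1,x_2)=\left(\frac{\partial}{\partial\rho}f^{\Phi_2}_\rho(x_1,x_2)\bigg|_{\rho=\rho_p(w)}\right)\frac{\partial}{\partial w_j}\rho_p(w),
\]
and compute $\partial_\rho f^{\Phi_2}_\rho$ by logarithmic differentiation. Writing $\log f^{\Phi_2}_\rho=-\log(2\pi)-\tfrac{1}{2}\log(1-\rho^2)-\frac{x_1^2-2\rho x_1x_2+x_2^2}{2(1-\rho^2)}$, the term $-\tfrac{1}{2}\log(1-\rho^2)$ contributes $\rho/(1-\rho^2)$, and differentiating the quadratic-form term with the quotient rule contributes $-\frac{\rho x_1^2-(1+\rho^2)x_1x_2+\rho x_2^2}{(1-\rho^2)^2}$. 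Multiplying $\partial_\rho\log f^{\Phi_2}_\rho$ back by $f^{\Phi_2}_\rho$, setting $\rho=\rho_p(w)$, and attaching the factor $\partial_{w_j}\rho_p(w)$ supplied by \eqref{eq:d rho d wj} produces \eqref{eq:d f 2-dim normal dy} exactly.

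Neither step is conceptually deep; the main place for care is the quotient-rule bookkeeping in the second part. There the numerator $(-2x_1x_2)\,2(1-\rho^2)+4\rho(x_1^2-2\rho x_1x_2+x_2^2)$ must be expanded and its cross terms collected, so that the $x_1x_2$ coefficient combines to $-4(1+\rho^2)$ (from $4\rho^2-8\rho^2-4$) and the overall factor $4$ cancels against $4(1-\rho^2)^2$ in the denominator; tracking these signs is the only real obstacle. Part one, by contrast, is immediate once the square is completed.
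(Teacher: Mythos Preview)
Your proposal is correct and follows essentially the same approach as the paper: completing the square for \eqref{eq:2dimNormal to 1dimNormal} and differentiating through $\rho$ for \eqref{eq:d f 2-dim normal dy}, with your logarithmic-differentiation framing being a cosmetic repackaging of the paper's product-rule computation. Your remark that the lower limit must be $-\infty$ rather than $0$ is also well taken; the paper's own proof silently uses $\int_{-\infty}^q$, consistent with every downstream application of the identity.
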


\begin{proof}
We have
\[
f^{\Phi_2}_\rho(x_1, x_2)=\frac{1}{2\pi\sqrt{1-\rho^2}}\exp\left(-\frac{x_1^2-2\rho x_1 x_2 + x_2^2}{2(1-\rho^2)}\right),
\]
and hence we obtain
\begin{align*}
&\int_{-\infty}^q f^{\Phi_2}_\rho(x, K) dx\\
&=\int_{-\infty}^q \frac{1}{2\pi\sqrt{1-\rho^2}}\exp\left(-\frac{x^2-2\rho x K + K^2}{2(1-\rho^2)}\right)dx\\
&=\int_{-\infty}^q \frac{1}{2\pi\sqrt{1-\rho^2}}\exp\left(-\frac{(x-\rho K)^2}{2(1-\rho^2)}\right)dx \exp\left(-\frac{K^2}{2}\right)\\
&= \frac{1}{\sqrt{2\pi}}\exp\left(-\frac{K^2}{2}\right) F_{\Phi}\left(\frac{q-\rho K}{\sqrt{1-\rho^2}}\right)
\end{align*}
which is \eqref{eq:2dimNormal to 1dimNormal}.
Moreover, we have
\begin{align*}
&\frac{\partial }{\partial w_j}f^{\Phi_2}_{\rho_p(w)} (x_1, x_2)\\
&
=
\frac{\partial}{\partial w_j}\left(\frac{1}{2\pi\sqrt{1-(\rho_p(w))^2}}\exp\left(-\frac{x_1^2-2\rho_p(w) x_1 x_2+x_2^2}{2(1-(\rho_p(w))^2)}\right)\right)\\
 &=\frac{1}{2\pi}\exp\left(-\frac{x_1^2-2\rho_p(w) x_1 x_2+x_2^2}{2(1-(\rho_p(w))^2)}\right)\frac{\partial}{\partial w_j}\frac{1}{\sqrt{1-(\rho_p(w))^2}} \\
 &+\frac{1}{2\pi\sqrt{1-(\rho_p(w))^2}}\frac{\partial}{\partial w_j}\exp\left(-\frac{x_1^2-2\rho_p(w) x_1 x_2+x_2^2}{2(1-(\rho_p(w))^2)}\right) \\
&=\frac{1}{2\pi}\exp\left(-\frac{x_1^2-2\rho_p(w) x_1 x_2+x_2^2}{2(1-(\rho_p(w))^2)}\right)\left(-\frac{1}{2}\right)\left(1-(\rho_p(w))^2\right)^{-\frac{3}{2}}(-2\rho_p(w)) \frac{\partial}{\partial w_j}\rho_p(w)\\
&+\frac{1}{2\pi\sqrt{1-(\rho_p(w))^2}}\exp\left(-\frac{x_1^2-2\rho_p(w) x_1 x_2+x_2^2}{2(1-(\rho_p(w))^2)}\right)\frac{\partial}{\partial w_j}\left(-\frac{x_1^2-2\rho_p(w) x_1 x_2+x_2^2}{2(1-(\rho_p(w))^2)}\right).
\end{align*}
Since we have
\begin{align*}
&\frac{\partial}{\partial w_j}\left(-\frac{x_1^2-2\rho_p(w) x_1 x_2+x_2^2}{2(1-(\rho_p(w))^2)}\right)\\
&=\frac{-\left(\left(-2x_1x_2\right)\left(2(1-(\rho_p(w))^2)\right)-(x_1^2-2\rho_p(w)x_1x_2+x_2^2) 2(-2\rho_p(w))\right)}{(2(1-(\rho_p(w))^2))^2}
\left(\frac{\partial}{\partial w_j}\rho_p(w)\right)\\
&=\frac{-\left(-x_1x_2+x_1x_2(\rho_p(w))^2
+\rho_p(w)x_1^2-2(\rho_p(w))^2x_1x_2+\rho_p(w)x_2^2
\right)}{(1-(\rho_p(w))^2)^2}\left(\frac{\partial}{\partial w_j}\rho_p(w)\right)\\
&=-\frac{\rho_p(w)x_1^2-(1+(\rho_p(w))^2)x_1x_2+\rho_p(w)x_2^2}{(1-(\rho_p(w))^2)^2}\left(\frac{\partial}{\partial w_j}\rho_p(w)\right)
\end{align*}
and
\begin{align*}
&\frac{\partial }{\partial w_j}f^{\Phi_2}_{\rho_p(w)} (x_1, x_2)\\
&=\frac{1}{2\pi}\exp\left(-\frac{x_1^2-2\rho_p(w) x_1 x_2+x_2^2}{2(1-(\rho_p(w))^2)}\right)\left(1-(\rho_p(w))^2\right)^{-\frac{3}{2}}\rho_p(w) \frac{\partial}{\partial w_j}\rho_p(w)\\
&-\frac{\rho_p(w)x_1^2-(1+(\rho_p(w))^2)x_1x_2+\rho_p(w)x_2^2}{2\pi\sqrt{1-(\rho_p(w))^2}(1-(\rho_p(w))^2)^2}\exp\left(-\frac{x_1^2-2\rho_p(w) x_1 x_2+x_2^2}{2(1-(\rho_p(w))^2)}\right)\frac{\partial}{\partial w_j}\rho_p(w)
\\
&=\left(\frac{\rho_p(w)}{1-(\rho_p(w))^2}-\frac{\rho_p(w)x_1^2-(1+(\rho_p(w))^2)x_1x_2+\rho_p(w)x_2^2}{(1-(\rho_p(w))^2)^2}\right)f^{\Phi_2}_{\rho_p(w)} (x_1, x_2)\frac{\partial}{\partial w_j}\rho_p(w)
\end{align*}
which is \eqref{eq:d f 2-dim normal dy}.
\end{proof}

\begin{proof}[Proof of Proposition \ref{prop:mctcovar}]
Since $x=-\CoVaR{\eta}{\Xi_p(w)}{\zeta}{\Xi_0}$, we get
\[
\frac{\partial }{\partial w_j}\CoVaR{\eta}{\Xi_p(w)}{\zeta}{\Xi_0}=-\frac{\partial x}{\partial w_j}.
\]
By applying implicit differentiation, we have
\[
\frac{\partial x}{\partial w_j} = -\frac{\frac{\partial }{\partial w_j}G(x,w)}{\frac{\partial }{\partial x}G(x,w)},
\]
and hence 
\[
\frac{\partial }{\partial w_j}\CoVaR{\eta}{\Xi_p(w)}{\zeta}{\Xi_0}=\frac{\frac{\partial }{\partial w_j}G(x,w)}{\frac{\partial }{\partial x}G(x,w)} \Bigg|_{x=-\CoVaR{\eta}{\Xi_p(w)}{\zeta}{\Xi_0}}.
\]
We have
\begin{align*}
\frac{\partial }{\partial w_j} G(x,w) 
&
= \int_0^\infty \Bigg(\int_{-\infty}^{u(x,w,t)} 
\int_{-\infty}^{v(t)} 
\frac{\partial}{\partial w_j}f^{\Phi_2}_{\rho_p(w)}(x_1, x_2) dx_1 dx_2
\\
&~~
+ \int_{-\infty}^{v(t)}
f^{\Phi_2}_{\rho_p(w)}(x_1, u(x,w,t)) dx_1  
\frac{\partial}{\partial w_j}u(x,w,t) \Bigg)f_\Tau(t)dt
\end{align*}

By \eqref{eq:2dimNormal to 1dimNormal}, we have
\begin{align*}
&\frac{\partial }{\partial w_j} G(x,w) \\
&= \int_0^\infty \Bigg( 
\int_{-\infty}^{u(x,w,t)} 
\int_{-\infty}^{v(t)} 
\frac{\partial}{\partial w_j}f^{\Phi_2}_{\rho_p(w)}(x_1, x_2) dx_1 dx_2
\\
&~~
+\frac{1}{\sqrt{2\pi}}\exp\left(-\frac{\left(u(x,w,t)\right)^2}{2}\right) 
F_{\Phi}\left(%
	\frac{
		v(t)
		 -\rho_p(w)u(x,w,t)}
	 {\sqrt{1-(\rho_p(w))^2}}%
\right)  
\frac{\partial}{\partial w_j}u(x,w,t) \Bigg) f_\Tau(t) dt
\end{align*}

Hence, by \eqref{eq:d f 2-dim normal dy}, we have
\begin{align*}
&\int_0^\infty \int_{-\infty}^{u(x,w,t)} \int_{-\infty}^{v(t)} 
\frac{\partial}{\partial w_j}f^{\Phi_2}_{\rho_p(w)}(x_1, x_2) dx_1 dx_2 f_\Tau(t) dt
\\
&
=\left(\frac{\partial}{\partial w_j}\rho_p(w)\right)\int_0^\infty \Bigg( \int_{-\infty}^{u(x,w,t)} \int_{-\infty}^{v(t)} 
\frac{\rho_p(w)}{1-(\rho_p(w))^2} f^{\Phi_2}_{\rho_p(w)} (x_1, x_2) dx_1 dx_2 \\
&- \int_{-\infty}^{u(x,w,t)} \int_{-\infty}^{v(t)}\frac{\rho_p(w)x_1^2-(1+(\rho_p(w))^2)x_1x_2+\rho_p(w)x_2^2}{(1-(\rho_p(w))^2)^2}
f^{\Phi_2}_{\rho_p(w)} (x_1, x_2) dx_1 dx_2 
\Bigg) f_\Tau(t) dt 
\\
&
=\left(\frac{\partial}{\partial w_j}\rho_p(w)\right)
\Bigg( \frac{\rho_p(w)}{1-(\rho_p(w))^2}F_{(\Xi_0, \Xi_p(w))}(v_0, x)\\
&- \int_0^\infty \int_{-\infty}^{u(x,w,t)} \int_{-\infty}^{v(t)}\frac{\rho_p(w)x_1^2-(1+(\rho_p(w))^2)x_1x_2+\rho_p(w)x_2^2}{(1-(\rho_p(w))^2)^2}
f^{\Phi_2}_{\rho_p(w)} (x_1, x_2) dx_1 dx_2 
 f_\Tau(t) dt \Bigg)
\end{align*}
Assume that  $(\epsilon_0, \epsilon_p)$ is the bivariate standard normal random vector with covariance $\rho_p(w)$ and
$\Tau$  is the tempered stable subordinator with parameters $(\alpha, \theta)$ independent of $(\epsilon_0, \epsilon_p)$.
Then we have
\begin{align*}
&\int_0^\infty \int_{-\infty}^{u(x,w,t)} \int_{-\infty}^{v(t)} 
\frac{\partial}{\partial w_j}f^{\Phi_2}_{\rho_p(w)}(x_1, x_2) dx_1 dx_2 f_\Tau(t) dt
 \\
 &
 =\frac{\rho_p(w)}{1-(\rho_p(w))^2} \left(\frac{\partial}{\partial w_j}\rho_p(w)\right)F_{(\Xi_0, \Xi_p(w))}\left(v_0, x\right)\\
&- \frac{\frac{\partial}{\partial w_j}\rho_p(w)}{(1-(\rho_p(w))^2)^2} E\left[(\rho_p(w)\epsilon_p^2-(1+(\rho_p(w))^2)\epsilon_p\epsilon_0+\rho_p(w)\epsilon_0^2)1_{\epsilon_p<u(x,w,\Tau)}1_{\epsilon_0<v(\Tau)}\right],
\end{align*}
and hence
\begin{align*}
&\frac{\partial }{\partial w_j} G(x,w) \\
 &
 =\frac{\rho_p(w)}{1-(\rho_p(w))^2}\left( \frac{\partial}{\partial w_j}\rho_p(w)\right)F_{(\Xi_0, \Xi_p(w))}\left(v_0, x\right)\\
&- \frac{\frac{\partial}{\partial w_j}\rho_p(w)}{(1-(\rho_p(w))^2)^2} E\left[(\rho_p(w)\epsilon_p^2-(1+(\rho_p(w))^2)\epsilon_p\epsilon_0+\rho_p(w)\epsilon_0^2)1_{\epsilon_p<u(x,w,\Tau)}1_{\epsilon_0<v(\Tau)}\right] 
\\
&~~
+\frac{1}{\sqrt{2\pi}}E\left[\exp\left(-\frac{\left(u(x,w,\Tau)\right)^2}{2}\right) 
F_{\Phi}\left(%
	\frac{
		v(\Tau)
		 -\rho_p(w)u(x,w,\Tau)}
	 {\sqrt{1-(\rho_p(w))^2}}%
\right)  
\frac{\partial}{\partial w_j}u(x,w,\Tau) \right],
\end{align*}
which is \eqref{eq:dGdwj}. On the other hand, we have
\begin{align*}
&\frac{\partial}{\partial x} G(x,w)\\
&=
\int_0^\infty\int_{-\infty}^{v(t)}
f^{\Phi_2}_{\rho_p(w)}(x_1, u(x,w,t))dx_1 
\frac{\partial}{\partial x}u(x,w,t) f_\Tau(t) dt
\\
&
=
\int_0^\infty
\frac{1}{\sqrt{2\pi}}\exp\left(-\frac{\left(u(x,w,t)\right)^2}{2}\right) 
F_{\Phi}\left(%
	\frac{
		v(t)
		 -\rho_p(w)u(x,w,t)}
	 {\sqrt{1-(\rho_p(w))^2}}%
\right)  
\frac{\partial}{\partial x}u(x,w,t) f_\Tau(t) dt
\end{align*}
by \eqref{eq:2dimNormal to 1dimNormal}. Since we have
\[
\frac{\partial}{\partial x}u(x,w,t)=\frac{1}{\gamma_p(w)\sqrt{t}},
\]
we obtain
\begin{align*}
&\frac{\partial}{\partial x} G(x,w)
\\
&
=
\int_0^\infty
\frac{1}{\sqrt{2\pi}\gamma_p(w)\sqrt{t}}\exp\left(-\frac{\left(u(x,w,t)\right)^2}{2}\right) 
F_{\Phi}\left(%
	\frac{
		v(t)
		 -\rho_p(w)u(x,w,t)}
	 {\sqrt{1-(\rho_p(w))^2}}%
\right)  
 f_\Tau(t) dt
\\
&
=
E\left[
\frac{1}{\sqrt{2\pi}\gamma_p(w)\sqrt{\Tau}}\exp\left(-\frac{\left(u(x,w,\Tau)\right)^2}{2}\right) 
F_{\Phi}\left(%
	\frac{
		v(\Tau)
		 -\rho_p(w)u(x,w,\Tau)}
	 {\sqrt{1-(\rho_p(w))^2}}%
\right)  \right]
\end{align*}
which is \eqref{eq:dGdx}.
\end{proof}

\begin{proof}[Proof of Proposition \ref{prop:mctCoCVaR}]
From \eqref{eq:CoCVaR Integral Form} and the definition of CoCVaR, We have
\begin{align*}
\CoCVaR{\eta}{\Xi_p(w)}{\zeta}{\Xi_0} = -\frac{H(w)}{F(w)},
\end{align*}
where
\begin{equation*}
H(w) = \int_0^\infty \int_{-\infty}^{C(w,t)} \int_{-\infty}^{v(t)}
   \left(\beta_p(w)(t-1)+x_2\gamma_p(w)\sqrt{t}\right)f^{\Phi_2}_{\rho_p(w)}(x_1, x_2) dx_1\,dx_2\,f_\Tau(t)dt
\end{equation*}
and
\begin{equation*}
F(w) = F_{(\Xi_0, \Xi_p(w))}\left(v_0, -\CoVaR{\eta}{\Xi_p(w)}{\zeta}{\Xi_0} \right).
\end{equation*}
Thus, we have
\begin{equation}\label{eq:dCoVaR dwj 0}
\frac{\partial}{\partial w_j}\CoCVaR{\eta}{\Xi_p(w)}{\zeta}{\Xi_0} = -\frac{1}{F(w)}\left(\frac{\partial}{\partial w_j}H(w)+\CoCVaR{\eta}{\Xi_p(w)}{\zeta}{\Xi_0}\frac{\partial}{\partial w_j}F(w)\right).
\end{equation}

\noindent\textit{(i) Calculating $\frac{\partial}{\partial w_j}H(w)$}\\
We have
\begin{align*}
&\frac{\partial}{\partial w_j}H(w)\\
&
  =\int_0^\infty \Bigg( \int_{-\infty}^{v(t)}
   \left(\beta_p(w)(t-1)+C(w,t)\gamma_p(w)\sqrt{t}\right) 
 f^{\Phi_2}_{\rho_p(w)}(x_1, C(w,t)) dx_1 \frac{\partial}{\partial w_j} C(w,t)
\\
& 
+ \int_{-\infty}^{C(w,t)} \int_{-\infty}^{v(t)}
  \frac{\partial}{\partial w_j}\left( \left(\beta_p(w)(t-1)+x_2\gamma_p(w)\sqrt{t}\right)f^{\Phi_2}_{\rho_p(w)}(x_1, x_2) \right) dx_1\,dx_2 \Bigg)f_\Tau(t)dt
\end{align*}
Let 
\[
I_1=\int_0^\infty \int_{-\infty}^{v(t)}
   \left(\beta_p(w)(t-1)+C(w,t)\gamma_p(w)\sqrt{t}\right) 
 f^{\Phi_2}_{\rho_p(w)}(x_1, C(w,t)) dx_1
 \frac{\partial}{\partial w_j} C(w,t)
 f_\Tau(t)dt
\]
and 
\[
I_2=\int_0^\infty\int_{-\infty}^{C(w,t)} \int_{-\infty}^{v(t)}
  \frac{\partial}{\partial w_j}\left( \left(\beta_p(w)(t-1)+x_2\gamma_p(w)\sqrt{t}\right)f^{\Phi_2}_{\rho_p(w)}(x_1, x_2) \right) dx_1\,dx_2 f_\Tau(t)dt.
\]
Then $\frac{\partial}{\partial w_j}H(w)=I_1+I_2$. 

Let's simplify $I_1$.
Note that $\beta_p(w)(t-1)+C(w,t)\gamma_p(w)\sqrt{t}=-\CoVaR{\eta}{\Xi_p(w)}{\zeta}{\Xi_0}$ by the definition of $C(w,t)$.
By \eqref{eq:2dimNormal to 1dimNormal}, we have
\begin{align*}
I_1
 &=
 \int_0^\infty
 \frac{-\CoVaR{\eta}{\Xi_p(w)}{\zeta}{\Xi_0}}{\sqrt{2\pi}}
  \exp\left( -\frac{\left(C(w,t)\right)^2}{2} \right) 
	F_{\Phi}\left(\frac{v(t)-\rho_p(w)C(w,t)} {\sqrt{1-(\rho_p(w))^2}}\right)  
\frac{\partial}{\partial w_j}C(w,\Tau) f_\Tau(t) dt
\\
&= \frac{-\CoVaR{\eta}{\Xi_p(w)}{\zeta}{\Xi_0}}{\sqrt{2\pi}} E\left[\exp\left( -\frac{\left(C(w,\Tau)\right)^2}{2} \right) 
	F_{\Phi}\left(\frac{v(\Tau)-\rho_p(w)C(w,\Tau)} {\sqrt{1-(\rho_p(w))^2}}\right)  
\frac{\partial}{\partial w_j}C(w,\Tau) \right],
\end{align*}
where $\Tau$ is the tempered stable subordinator with parameters $(\alpha, \theta)$.

Consider the integral $I_2$. We have
\begin{align*}
&
\frac{\partial}{\partial w_j}\left( \left(\beta_p(w)(t-1)+x_2\gamma_p(w)\sqrt{t}\right)f^{\Phi_2}_{\rho_p(w)}(x_1, x_2) \right)
\\
&
=f^{\Phi_2}_{\rho_p(w)}(x_1, x_2)\frac{\partial}{\partial w_j} \left(\beta_p(w)(t-1)+x_2\gamma_p(w)\sqrt{t}\right)
\\
&
+
\left(\beta_p(w)(t-1)+x_2\gamma_p(w)\sqrt{t}\right)\frac{\partial}{\partial w_j}f^{\Phi_2}_{\rho_p(w)}(x_1, x_2) 
\\
&
=f^{\Phi_2}_{\rho_p(w)}(x_1, x_2)\left((t-1)\frac{\partial}{\partial w_j} \beta_p(w)+x_2\sqrt{t}\frac{\partial}{\partial w_j} \gamma_p(w)\right)
\\
&
+
\left(\beta_p(w)(t-1)+x_2\gamma_p(w)\sqrt{t}\right)\frac{\partial}{\partial w_j}f^{\Phi_2}_{\rho_p(w)}(x_1, x_2).
\end{align*}
By \eqref{eq:d gamma d wj} and  \eqref{eq:d f 2-dim normal dy}, we have
\begin{align*}
&
\frac{\partial}{\partial w_j}\left( \left(\beta_p(w)(t-1)+x_2\gamma_p(w)\sqrt{t}\right)f^{\Phi_2}_{\rho_p(w)}(x_1, x_2) \right)\\
&
=f^{\Phi_2}_{\rho_p(w)}(x_1, x_2)\left((t-1)-\frac{x_2\beta_p(w)\left(\frac{2\theta}{2-\alpha}\right)\sqrt{t}}{\gamma_p(w)}\right) \frac{\partial}{\partial w_j} \beta_p(w)
\\
&
+\left(\beta_p(w)(t-1)+x_2\gamma_p(w)\sqrt{t}\right)\\
&\times
 \left(\frac{\rho_p(w)}{1-(\rho_p(w))^2}
 -\frac{\rho_p(w)x_1^2-(1+(\rho_p(w))^2)x_1x_2+\rho_p(w)x_2^2}{(1-(\rho_p(w))^2)^2}\right)
 \\
 &\times
 f^{\Phi_2}_{\rho_p(w)}(x_1, x_2)\frac{\partial}{\partial w_j}\rho_p(w).
\end{align*}
Hence, we obtain
\begin{align*}
I_2&=  \int_0^\infty \int_{-\infty}^{C(w,t)} \int_{-\infty}^{v(t)}
\left((t-1)-\frac{x_2\beta_p(w)\left(\frac{2\theta}{2-\alpha}\right)\sqrt{t}}{\gamma_p(w)}\right)f^{\Phi_2}_{\rho_p(w)}(x_1, x_2) \frac{\partial}{\partial w_j} \beta_p(w)
\\
&
+\left(\beta_p(w)(t-1)+x_2\gamma_p(w)\sqrt{t}\right)
\\
&\times
 \left(\frac{\rho_p(w)}{1-(\rho_p(w))^2}-\frac{\rho_p(w)x_1^2-(1+(\rho_p(w))^2)x_1x_2+\rho_p(w)x_2^2}{(1-(\rho_p(w))^2)^2}\right)\frac{\partial}{\partial w_j}\rho_p(w)
\\
&\times
 f^{\Phi_2}_{\rho_p(w)}(x_1, x_2)
 dx_1\,dx_2\,f_\Tau(t)dt. 
\end{align*}
Let
\[
J_1=\int_0^\infty \int_{-\infty}^{C(w,t)} \int_{-\infty}^{v(t)}
\left((t-1)-\frac{x_2\beta_p(w)\left(\frac{2\theta}{2-\alpha}\right)\sqrt{t}}{\gamma_p(w)}\right)f^{\Phi_2}_{\rho_p(w)}(x_1, x_2)f_\Tau(t)dt,
\]
\[
J_2 = \int_0^\infty \int_{-\infty}^{C(w,t)} \int_{-\infty}^{v(t)} \left(\beta_p(w)(t-1)+x_2\gamma_p(w)\sqrt{t}\right) f^{\Phi_2}_{\rho_p(w)}(x_1, x_2)
 dx_1\,dx_2\,f_\Tau(t)dt,
\]
and
\begin{align*}
J_3
&=\int_0^\infty \int_{-\infty}^{C(w,t)} \int_{-\infty}^{v(t)}
\left(\beta_p(w)(t-1)+x_2\gamma_p(w)\sqrt{t}\right)
\\
&
\times
\left(
\frac{\rho_p(w)x_1^2-(1+(\rho_p(w))^2)x_1x_2+\rho_p(w)x_2^2}{(1-(\rho_p(w))^2)^2}
\right)
f^{\Phi_2}_{\rho_p(w)}(x_1, x_2)
 dx_1\,dx_2\,f_\Tau(t)dt
\end{align*}
Then we have 
\[
I_2 = J_1 \frac{\partial}{\partial w_j} \beta_p(w) +\left(\frac{\rho_p(w)}{1-(\rho_p(w))^2}J_2-J_3\right)\frac{\partial}{\partial w_j}\rho_p(w).
\]
We get $J_2=-F(w)\CoCVaR{\eta}{\Xi_p(w)}{\zeta}{\Xi_0}$ by \eqref{eq:CoCVaR Integral Form}.
Assume $(\epsilon_0, \epsilon_p)$ is the bivariate standard normal distributed random vector with covariance $\rho_p(w)$, and $\Tau$ is the tempered stable subordinator with parameters $(\alpha, \theta)$ independent of $(\epsilon_0, \epsilon_p)$.
Then 
\[
J_1 =  E\left[\left((\Tau-1)-\frac{\epsilon_p\beta_p(w)\left(\frac{2\theta}{2-\alpha}\right)\sqrt{\Tau}}{\gamma_p(w)}\right) 1_{\epsilon_p<C(w,\Tau)}1_{\epsilon_0<v(\Tau)}\right]
\]
and
\[
J_3 = E\left[ \left(\frac{\rho_p(w)\epsilon_0^2-(1+(\rho_p(w))^2)\epsilon_0 \epsilon_p+\rho_p(w) \epsilon_p^2}{(1-(\rho_p(w))^2)^2}\right)\xi_p(w) 1_{\epsilon_p<C(w,\Tau)}1_{\epsilon_0<v(\Tau)}\right]
\]
where $\xi_p(w) = \beta_p(w)(\Tau-1)+\epsilon_p \gamma_p(w)\sqrt{\Tau}$. Substituting $I_1$, $J_1$, $J_2$, $J_3$ into 
\[
\frac{\partial}{\partial w_j}H(w)
=I_1+J_1 \frac{\partial}{\partial w_j} \beta_p(w) +\left(\frac{\rho_p(w)}{1-(\rho_p(w))^2}J_2-J_3\right)\frac{\partial}{\partial w_j}\rho_p(w),
\]
we obtain 
\begin{align}
\label{eq:dH(w)dwj}
&\frac{\partial}{\partial w_j}H(w)
\\
\nonumber
&
=
\frac{-\CoVaR{\eta}{\Xi_p(w)}{\zeta}{\Xi_0}}{\sqrt{2\pi}} E\left[\exp\left( -\frac{\left(C(w,\Tau)\right)^2}{2} \right) 
	F_{\Phi}\left(\frac{v(\Tau)-\rho_p(w)C(w,\Tau)} {\sqrt{1-(\rho_p(w))^2}}\right)  
\frac{\partial}{\partial w_j}C(w,\Tau) \right]
\\
\nonumber
&
+E\left[\left((\Tau-1)-\frac{\epsilon_p\beta_p(w)\left(\frac{2\theta}{2-\alpha}\right)\sqrt{\Tau}}{\gamma_p(w)}\right) 1_{\epsilon_p<C(w,\Tau)}1_{\epsilon_0<v(\Tau)}\right]\frac{\partial}{\partial w_j} \beta_p(w)
 \\
\nonumber
&
 - \frac{\rho_p(w)}{1-(\rho_p(w))^2} \left(\frac{\partial}{\partial w_j} \rho_p(w)\right)F(w) \CoCVaR{\eta}{\Xi_p(w)}{\zeta}{\Xi_0}
  \\
\nonumber
 &
 - E\left[ \left(\frac{\rho_p(w)\epsilon_0^2-(1+(\rho_p(w))^2)\epsilon_0 \epsilon_p+\rho_p(w) \epsilon_p^2}{(1-(\rho_p(w))^2)^2}\right)\xi_p(w) 1_{\epsilon_p<C(w,\Tau)}1_{\epsilon_0<v(\Tau)}\right]\frac{\partial}{\partial w_j} \rho_p(w).
\end{align}

(ii) Calculating $\frac{\partial}{\partial w_j}F(w)$\\
Since we have
\begin{align*}
F(w)=\int_0^\infty \int_{-\infty}^{C(w,t)} \int_{-\infty}^{v(t)} f^{\Phi_2}_{\rho_p(w)}(x_1, x_2) dx_1 dx_2 f_\Tau(t)dt,
\end{align*}
we get
\begin{align*}
\frac{\partial}{\partial w_j}F(w)
&
= \int_0^\infty \Bigg(\int_{-\infty}^{C(w,t)} 
\int_{-\infty}^{v(t)} 
\frac{\partial}{\partial w_j}f^{\Phi_2}_{\rho_p(w)}(x_1, x_2) dx_1 dx_2
\\
&~~
+ \int_{-\infty}^{v(t)}
f^{\Phi_2}_{\rho_p(w)}(x_1, u(x,w,t)) dx_1  
\frac{\partial}{\partial w_j}C(w,t) \Bigg)f_\Tau(t)dt
\end{align*}
Using the same arguments in the proof of Proposition \ref{prop:mctcovar}, we obtain
\begin{align}
\label{eq:dF(w)dwj}
&\frac{\partial }{\partial w_j} F(w) 
\\
\nonumber
&
 =\frac{\rho_p(w)}{1-(\rho_p(w))^2}\left( \frac{\partial}{\partial w_j}\rho_p(w)\right) F(w) 
 \\
 \nonumber
&
- E\left[\left(\frac{ \rho_p(w)\epsilon_p^2-(1+(\rho_p(w))^2)\epsilon_p\epsilon_0+\rho_p(w)\epsilon_0^2}{(1-(\rho_p(w))^2)^2}\right)1_{\epsilon_p<C(w,\Tau)}1_{\epsilon_0<v(\Tau)}\right] \frac{\partial}{\partial w_j}\rho_p(w)
\\
\nonumber
&~~
+\frac{1}{\sqrt{2\pi}}E\left[\exp\left(-\frac{\left(C(w,\Tau)\right)^2}{2}\right) 
F_{\Phi}\left(%
	\frac{
		v(\Tau)
		 -\rho_p(w)u(x,w,\Tau)}
	 {\sqrt{1-(\rho_p(w))^2}}%
\right)  
\frac{\partial}{\partial w_j}C(w,\Tau) \right].
\end{align}
By substituting \eqref{eq:dH(w)dwj}, \eqref{eq:dF(w)dwj}, and $F(w)=F_{(\Xi_0, \Xi_p(w))}\left(v_0, -\CoVaR{\eta}{\Xi_p(w)}{\zeta}{\Xi_0} \right) = \eta\zeta$ into \eqref{eq:dCoVaR dwj 0}, we get
\begin{align*}
&\frac{\partial}{\partial w_j}\CoCVaR{\eta}{\Xi_p(w)}{\zeta}{\Xi_0} 
\\
=&
 -\frac{1}{\eta\zeta}\Bigg(
E\left[\left((\Tau-1)-\frac{\epsilon_p\beta_p(w)\left(\frac{2\theta}{2-\alpha}\right)\sqrt{\Tau}}{\gamma_p(w)}\right) 1_{\epsilon_p<C(w,\Tau)}1_{\epsilon_0<v(\Tau)}\right]\frac{\partial}{\partial w_j} \beta_p(w)
\\
&
- E\left[ \left(\frac{\rho_p(w)\epsilon_0^2-(1+(\rho_p(w))^2)\epsilon_0 \epsilon_p+\rho_p(w) \epsilon_p^2}{(1-(\rho_p(w))^2)^2}\right)\xi_p(w) 1_{\epsilon_p<C(w,\Tau)}1_{\epsilon_0<v(\Tau)}\right]\frac{\partial}{\partial w_j} \rho_p(w)
\\
&
- \CoCVaR{\eta}{\Xi_p(w)}{\zeta}{\Xi_0} E\left[\left(\frac{\rho_p(w)\epsilon_p^2-(1+(\rho_p(w))^2)\epsilon_p\epsilon_0+\rho_p(w)\epsilon_0^2}{(1-(\rho_p(w))^2)^2}\right)1_{\epsilon_p<C(w,\Tau)}1_{\epsilon_0<v(\Tau)}\right] \frac{\partial}{\partial w_j}\rho_p(w)
\Bigg)
\end{align*}
Hence we obtain \eqref{eq:mctcocvar}.

\end{proof}


\singlespacing
\bibliographystyle{decsci_mod}
\bibliography{refs_aaron_mctcocvar}

\end{document}